\newcommand{\beq}{\begin{equation}}
\newcommand{\eeq}{\end{equation}}
\newcommand{\bqa}{\begin{eqnarray}}
\newcommand{\eqa}{\end{eqnarray}}
\definecolor{maroon}{rgb}{0.7,0,0}
\definecolor{ngreen}{rgb}{0.3,0.7,0.3}
\definecolor{golden}{rgb}{0.8,0.6,0.1}
\definecolor{npurple}{rgb}{0.3,0,0.6}
\newcommand{\be}{\begin{equation}}
\newcommand{\ee}{\end{equation}}
\newcommand{\ba}{\begin{eqnarray}}
\newcommand{\ea}{\end{eqnarray}}
\newtheorem{observation}{Observation}
\newtheorem{definition}{Definition}
\newtheorem{thm}{Theorem}
\newtheorem{lem}{Lemma}
\newtheorem{prop}{Proposition}
\newtheorem{rem}{Remark}
\begin{document}

\title{Global Coherence in Quantum Discord as a Resource}  

\author{Chellasamy Jebarathinam}
 % \email{jebarathinamchellasamy@phys.ncts.ntu.edu.tw}
   \affiliation{Physics Division, National Center for Theoretical Sciences, National Taiwan University, Taipei 106319, Taiwan, Republic of China}

\author{Huan-Yu Ku}
\email{huan.yu@ntnu.edu.tw}
\affiliation{Department of Physics, National Taiwan Normal University, Taipei 116059, Taiwan, Republic of China}

\author{Hao-Chung Cheng}
\email{haochung@ntu.edu.tw}
 \affiliation{Department of Electrical Engineering and Graduate Institute of Communication Engineering, National Taiwan University, Taipei 106319, Taiwan, Republic of China}

 \affiliation{Department of Mathematics, National Taiwan University, Taipei 106319, Taiwan, Republic of China}

 \affiliation{Center for Quantum Science and Engineering, National Taiwan University, Taipei 106319, Taiwan, Republic of China}
 \affiliation{Physics Division, National Center for Theoretical Sciences, National Taiwan University, Taipei 106319, Taiwan, Republic of China}
 \affiliation{
Hon Hai (Foxconn) Quantum Computing Center, New Taipei City 236, Taiwan, Republic of China}

\author{Hsi-Sheng Goan}

 \email{goan@phys.ntu.edu.tw}
 \affiliation{Department of Physics and Center for Theoretical Physics,
National Taiwan University, Taipei 106319, Taiwan, Republic of China}
\affiliation{Center for Quantum Science and Engineering, National Taiwan University, Taipei 106319, Taiwan, Republic of China}
 \affiliation{Physics Division, National Center for Theoretical Sciences, National Taiwan University, Taipei 106319, Taiwan, Republic of China}
   
\begin{abstract}
This work addresses which aspect of \textit{bipartite coherence} in \textit{quantum discord} is essential for \textit{genuinely quantum correlation}.
To this end,  \textit{global coherence} of bipartite states is defined as a form of bipartite coherence that is not local coherence in either of the subsystems or in both subsystems alone, and we identify it as being indicated by a witness of discord. With \textit{global coherence}  as \textit{a resource}, \textit{any local operations} of the form $\Phi_A \otimes \Phi_B$, which may create coherence locally, are \textit{free operations}. This implies that with global coherence as a resource for operational tasks, any local operations can be freely used, but require \textit{classical randomness not to be freely} available. Using this identification, we demonstrate that \textit{global coherence} is a \textit{necessary resource} for the task of \textit{semi-device-independent steerability} of quantum discord. On the other hand, for the task of \textit{remote state preparation} using \textit{quantum discord} in two-qubit states,   a \textit{necessary and sufficient quantum resource} is identified by invoking a \textit{witness of global coherence}.
\end{abstract}

\maketitle 
 %  \date{\today}
   
\section{Introduction}
Quantum coherence stems from the superposition principle of quantum theory and lies at the core of the quantumness of the states of a single quantum system. 
Recently, many studies have emerged exploring quantum coherence as a resource
\cite{BCP14,WY16,SAP17} and its applications for quantum information processing tasks \cite{Liu_2023,LLL+25}.
 Extending to many-body systems, quantum entanglement is an essential coherence feature. 
 In particular, one can always characterize and even quantify quantum coherence via entanglement by converting the coherence of a single quantum system into entanglement~\cite{SSD+15,ZMC+17,Chi17}.

 Quantum discord characterizes the quantumness of correlations beyond entanglement~\cite{OZ01,HV01}. 
 This initiated the exploration of the role of quantum discord as a resource for quantum information processing tasks using separable states~\cite{DSC08,LBA+08,DLM+12,MBC+12,ABC16}. 
 This is important from the perspective of more general quantum resources than entanglement and others~\cite{CG19,KTA+24} as well as experimental implementations in the presence of noise~\cite{Cd11}. 
 
 \textit{Genuinely quantum correlation} cannot be generated via local operations~\cite{VPR+97,VP98}. Genuinely quantum correlation is also present in discord without entanglement; however, certain discord without entanglement can be generated via local operations~\cite{GBH+10,BPP15}. 
 In~\cite{MYG+16}, it has been shown that in converting the coherence of a subsystem to quantum correlation under bipartite incoherent operations, discord without entanglement can also be generated. In this context, discord without entanglement has a \textit{genuinely quantum correlation} since the coherence of the subsystem is consumed in the process of bipartite incoherent operations, as also quantitatively studied in~\cite{MYG+16}.

 The present work is motivated by two key questions. 
\begin{enumerate}
    \item From the perspective of the foundations of discord \cite{Vedral2017}, it is relevant to ask whether there are operational tasks that require the aforementioned \textit{genuinely quantum correlation} present in discord without entanglement.
    \item  On the other hand,  from the perspective of using quantum discord as a resource \cite{Gio13}, the question arises as to whether the \textit{genuinely quantum correlation} can be defined within the context of a quantum resource theory.
\end{enumerate}

 To this end, we consider the \textit{witness of the discord} defined by the rank of the correlation matrix of the density matrix \cite{DVB10}. The \textit{bipartite coherence} of discordant states, \textit{whose discord is not indicated by this witness}, is due to \textit{local coherence} in a subsystem or in both subsystems. With this observation,  the notion of \textit{global coherence}  of bipartite states is operationally defined for any discordant state whose discord is indicated by the aforementioned witness of discord. We then introduce the resource theory of global coherence and demonstrate that \textit{any local operations} of the form $\Phi_A \otimes \Phi_B$, which may create coherence locally, are \textit{free operations}. To rigorously define the global coherence of bipartite states, in the context of the conversion of coherence into genuinely quantum correlation~ \cite{MYG+16}, we point out a strong link between the global coherence resource frameworks in the bipartite and single-partite states. 
 Then it is found that global coherence is necessary for \textit{semi-device-independent (SDI) steerability} \cite{JDS_PRA23}, just as entanglement is necessary for standard steerability formalized in~\cite{WJD07}. Finally, 
 to show how global coherence plays a role in the quantum communication task of \textit{remote state preparation (RSP)} using two-qubit states, we propose \textit{Mermin strength} as a witness for global coherence. We then demonstrate that any two-qubit state with a nonvanishing Mermin strength can be used to provide a quantum advantage for remote state preparation.

This article is organized as follows. In Sec.~\ref{Prlm}, we review quantum discord, and introduce semi-device-independent steering as preliminaries of this work. In Sec.~\ref{GCdefRT}, we formally define global coherence of bipartite states and introduce the resource theory of global coherence, and demonstrate its relationship with that of coherence of single-partite states. In Sec.~\ref{OTGR}, we then show the main result that such global coherence is necessary to demonstrate semi-device-independent steerability, and finally, as another main result, we identify the precise resource for the quantum communication task of remote state preparation using correlations in two-qubit states. In Sec.~\ref{concDisc}, we present conclusions and discussions.

\section{Preliminaries} \label{Prlm}
\subsection{Quantum discord}\label{ctoe}
Quantum discord captures a different form of quantumness in bipartite states that goes beyond quantum correlation due to entanglement \cite{OZ01,HV01}. 
In the following, we present the definition of quantum discord from Alice to Bob $D^{\rightarrow}(\rho_{AB})$. Before that, we introduce the definition of the classical correlation $C^{\rightarrow}(\rho_{AB})$ used in defining quantum discord. It is defined with respect to a measurement described by a positive operator-valued measure (POVM), \(\{M^A_i\}\), performed on subsystem \(A\). In such an asymmetric measurement scenario, we denote the probability of obtaining the outcome \(i\) on the measured subsystem as \(p_i = \tr\{ \rho_{AB} M^A_i   \otimes \mathbb{I}^B\} \)  and the corresponding post-measurement state for the subsystem \(B\) as \(\rho^B_{i} = \frac{1}{p_i} \tr_{A}\{\rho_{AB} M^A_i \otimes  \mathbb{I}_B \}\). Classical correlation $C^{\rightarrow}(\rho_{AB})$ for local measurements in subsystem $A$ is defined as the optimal amount of correlation in the bipartite state as 
 \be
 C^{\rightarrow}(\rho_{AB}) := S(\rho_B) - \min_{\{M^A_i\}} \left( \sum_i p_iS(\rho^B_{i})\right),
 \ee
where $S(\rho)$ is the Von Neumann entropy of a quantum state $\rho$, defined as $S(\rho)=-\tr{\rho\log_2\rho}$.
Quantum discord is then defined as 
\begin{align}\label{QDdef}
D^{\rightarrow}(\rho_{AB}):= I(\rho_{AB}) - C^{\rightarrow}(\rho_{AB}).
\end{align}            
Here $I(\rho_{AB})= S(\rho_A)+ S(\rho_B)- S(\rho_{AB})$
is the quantum mutual information and can be interpreted as the total correlations in \(\rho_{AB}\).

 The quantum discord from Alice to Bob $D^{\rightarrow}(\rho_{AB})$  vanishes  if and only if $\rho_{AB}$ can be decomposed in classical-quantum (CQ) form,
\begin{align} \label{CQ}
\rho_{\texttt{CQ}}&=\sum_i q_i \ket{i} \bra{i}^A \otimes \rho_{i}^B,
\end{align}
where $\{\ket{i}\}$ forms an orthonormal basis in Alice's Hilbert space, $\rho_{i}^B$ are any quantum states on Bob's Hilbert space, and $q_i$'s are probabilities. On the other hand,  quantum discord from Bob to Alice $D^{\leftarrow}(\rho_{AB})$ is defined with respect to local quantum measurements on Bob's subsystem. $D^{\leftarrow}(\rho_{AB})$ vanishes for a given $\rho_{AB}$ if and only if it is a quantum-classical (QC) state which has a form as in Eq.~(\ref{CQ}) with the subsystems $A$ and $B$   permuted. The states with zero discord both ways are called a classically correlated (CC) state, given by
\begin{equation} \label{CCIC}
\rho_{\rm CC}=\sum_{i,j}q_{ij} \ketbra{i}{i}^A \otimes \ketbra{j}{j}^B,
\end{equation}
where $\{\ket{j}\}$ forms an orthonormal basis on Bob's Hilbert space and $q_{ij}$'s are probabilities. The states that can be decomposed in the above form are incoherent with respect to the product basis $\{\ket{i}^A \otimes \ket{j}^B\}$.
Any bipartite state that is not a CC state is called a \textit{discordant state} since it has a nonzero discord in at least one way.

Entanglement as a resource in quantum information science satisfies the requirement that the quantum resource cannot be generated locally from free states \cite{CG19}. However, quantum discord as a resource does not satisfy this requirement, as quantum discord can be created locally from a CC state \cite{DVB10, SKB11, GLH+12, Gio13, BPP15}.  
Specifically, let $\Phi_A \otimes \Phi_B$ be a local operation, where $\Phi_A$ and $\Phi_B$ are completely positive trace-preserving (CPTP) maps on subsystems $A$ and $B$, respectively. Then, there are discordant states that can be created via such local operations on a CC state given by (\ref{CCIC}), with $q_{ij}= \delta_{ij} \tilde q_i$, i.e., the transformed state $(\Phi_A \otimes \Phi_B) \rho_{\rm CC}$ has a non-zero discord for a suitable choice of local operations. Such discordant states can always be written as ~\cite{GLH+12}
\begin{equation}\label{GLHd}
\rho_{AB}=\sum^{d_\lambda-1}_{\lambda=0} q_\lambda \rho^A_\lambda \otimes \rho^B_\lambda,
\end{equation}
with $d_\lambda \le d_{\min}$, where $d_{\min}:= \min\{d_A, d_B \}$~\cite{GLH+12}, for a suitable choice of local operations.

To give an example of a local creation of discord, consider the two-qubit CC state given by
\be \label{CC}
\rho_{AB}=\frac{1}{2}\left( \ketbra{00}{00}+ \ketbra{11}{11} \right).
\ee 
By applying the local operation $\Phi \otimes \Phi$, defined through $\Phi(\rho)=\ketbra{0}{0} \rho \ketbra{0}{0} + \ketbra{+}{1} \rho \ketbra{1}{+} $, where $\rho$ is a qubit state, to both qubits, we get the discordant state given by,
\be \label{LC_{A,B}}
\rho_{AB}=\frac{1}{2}\left( \ketbra{00}{00} +\ketbra{++}{++} \right). 
\ee

To verify the discordant states that cannot be generated by a local operation as mentioned above, we introduce the witness of discord~\cite{DVB10}, which is the rank of the correlation matrix of the given density matrix. This witness implies the discordant states, which do not have a local creation of the discord, as mentioned above. To introduce the witness, let us introduce the following decomposition of bipartite states~\cite{DVB10}.
 Given a bipartite state $\rho_{AB}$, it can always be decomposed as a sum of arbitrary bases of Hermitian operators $\{A_i\}$ and $\{B_i\}$ as 
\be
\rho_{AB} =\sum^{d^2_A}_{n=1}
\sum^{d^2_B}_{m=1} r_{nm} A_n \otimes B_m, 
\ee
where $d_A$ ($d_B$) is the dimension of the Hilbert space $A$
($B$). Using this representation, the correlation matrix $R = (r_{nm})$ is defined, which can be rewritten using its singular value decomposition and cast in a diagonal representation as 
\begin{equation}\label{LR}
\rho_{AB}=\sum^{L_R}_{n=1} c_n S_n \otimes F_n. 
\end{equation}
Here, $L_R$ is the rank of $R$ and quantifies how many product operators are needed to represent $\rho_{AB}$.
The value of $L_R$ can be used to witness the presence of nonzero quantum discord \cite{DVB10}. For any CQ or QC state, $L_R$ is upper bounded by the minimum dimension of the subsystems
$d_{\min}$ \cite{BPP15}. On the other hand, in general,
 the correlation rank is bounded by the square of $d_{\min}$:
$L_R \le d^2_{\min}$. Therefore, states with $L_R > d_{\min}$ will necessarily be discordant since they are neither a CQ state nor a QC state. Note that, as shown in~\cite{GLH+12},  for quantum states whose discord can be created by local operations, $L_R \le d_{\min}$. 
  
\subsection{Semi-device-independent steering} \label{SDInl}
Let us consider a steering scenario where the steering party Alice performs a set of black-box (uncharacterized) measurements labeled $x$ and obtains an outcome $a$ to produce a set of conditional states on Bob's side. This is a one-sided device-independent (1SDI) scenario, where Alice's side is device-independent, i.e., her measurements are uncharacterized, and Bob's side is trusted to perform state tomography or characterized measurements \cite{CS17}. We consider the 1SDI scenario where Bob performs a set of characterized measurements labeled $y$ on the conditional states prepared on his side and obtains an outcome $b$. In this context, a bipartite correlation $P(a,b|x,y)$, which is described by a set of conditional joint probabilities $\{p(a,b|x,y)\}_{a,x,b,y}$, is produced. 

Quantum steering is demonstrated if the correlation $P(a,b|x,y)$ does not have a \textit{local hidden variable}-\textit{local hidden state} (LHV-LHS) model \cite{WJD07}. In other words, quantum steering certifies entanglement in a 1SDI way \cite{JWD07}. Quantum steering can be witnessed through the violation of a steering inequality~\cite{CJW+09}. As an illustration of this, consider the two-setting steering scenario in which Alice performs two dichotomic black-box measurements, denoted by $A_x$, with $x=0,1$,  and Bob performs two dichotomic qubit measurements. For this scenario, the linear steering inequality is given by
\be\label{SI2}
\braket{A_0 \otimes B_0}+\braket{A_1 \otimes B_1} \le \sqrt{2},
\ee
where $B_y$, with $y=0,1$, are two of the three Pauli observables $\sigma_i$, with $i=x,y$ and $z$ and $\braket{A_x \otimes B_y}=\tr\{\rho_{AB} A_x \otimes B_y\}$~\cite{CJW+09}. As an example of the violation of this inequality, for measurements $A_0=B_0=\sigma_x$ and $A_1=B_1=\sigma_z$, the maximally entangled state of two qubits $\ket{\Phi^+}=\frac{1}{\sqrt{2}}(\ket{00}+\ket{11})$ gives rise to the maximal violation, that is, $\braket{A_0 \otimes B_0}+\braket{A_1 \otimes B_1}=2>\sqrt{2}$.

\begin{figure}
	
		\centering\includegraphics[width=5cm]{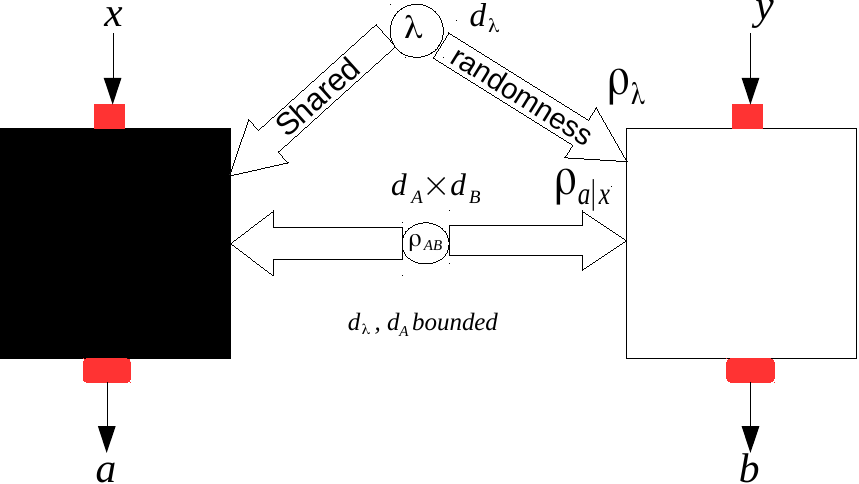}
		\caption{One-sided semi-device-independent scenario \cite{JDS_PRA23} where the local Hilbert-space dimension $d_A$ of  the shared bipartite state $\rho_{AB}$ in $\mathbb{C}^{d_A}\otimes\mathbb{C}^{d_B}$ producing the correlation $\{p(a,b|x,y)\}_{a,x,b,y}$ is also bounded. Here, the bound on $\lambda$ given by $d_{\lambda} \le d_A$ is used as a tool to witness superunsteerability; on the other hand, any amount of shared randomness $\lambda$ is not freely available if superunsteerability is used as a resource \cite{LBL+15,JD23}.  \label{Fig:superunsteer}}	
\end{figure}

Let us now consider a one-sided semi-device-independent (1SSDI) scenario, where the Hilbert space dimension of the steering party, say Alice, is trusted, as in Fig.~\ref{Fig:superunsteer}. This scenario has been considered here to define superunsteerability~\cite{DBD+18}.  Suppose that an unsteerable bipartite correlation, which has an LHV-LHS model, is produced using a quantum state in $\mathbb{C}^{d_A}\otimes\mathbb{C}^{d_B}$ for the measurements described by $\{M^A_{a|x}\}_{a,x}$ and $\{M^B_{b|y}\}_{b,y}$, which are a 
%positive operator-valued measure (POVM), 
POVM, i.e., for each $x$,  $M^A_{a|x} \ge 0$ and $\sum_a M^A_{a|x} = \openone_{d_A}$, where $\openone_{d_A}$ is the identity operator acting on $\mathbb{C}^{d_A}$ and similar conditions hold for Bob's measurements for each $y$.  Then, the LHV-LHS model of the correlation can be decomposed as follows:
\be
p(a,b|x,y)= \sum^{d_\lambda-1}_{\lambda=0} p_{\lambda} p(a|x,\lambda) p(b|y; \rho_\lambda) \quad \forall a,x,b,y, \label{LHV-LHS}
\ee
where  $d_\lambda$ denotes the dimension of the finite shared randomness $\lambda$ used to simulate the correlation. 
$\{p(a|x,\lambda)\}_{a,x}$ in Eq. (\ref{LHV-LHS}) is the set of arbitrary probability distributions $p(a|x,\lambda)$ conditioned on shared randomness/hidden variable $\lambda$ occurring with probability $p_{\lambda}$; $\sum^{d_\lambda-1}_{\lambda=0} p_{\lambda} =1$. On the other hand, $\{p(b|y; \rho_\lambda) \}_{b,y}$ is the set of quantum probability distributions $p(b|y; \rho_\lambda)=\tr\{M^B_{b|y}\rho_\lambda\}$, arising from some LHS $\rho_\lambda$,  which satisfies $\sum_{a,\lambda} p_\lambda p(a|x,\lambda) \rho_\lambda=\rho_B$, where $\rho_B$ is the reduced state of $\rho_{AB}$ producing the correlation. 
Superunsteerability is defined as follows.
\begin{definition}
Suppose an unsteerable correlation, which has an LHV-LHS model as in Eq. (\ref{LHV-LHS}), is produced in a 1SSDI scenario, as in Fig.~\ref{Fig:superunsteer}. Then, superunsteerability is demonstrated if and only if $d_\lambda$ in Eq. (\ref{LHV-LHS})  satisfies $d_\lambda >  d_A$ for all such LHV-LHS models. 
\end{definition}
Superunsteerability has been invoked to provide an operational characterization of the quantumness of unsteerable bipartite states~\cite{Jeb14, DBD+18, JD23}. Here, the quantumness arises because superunsteerability requires quantum discord in the given unsteerable state.

More specifically, superunsteerability witnesses 
%semi-device-independent (SDI) 
SDI steerability of discordant states~\cite{JDS_PRA23,JDS24}.
 SDI steering can be defined as follows.
\begin{definition}\label{SDISteerDef}
In a 1SSDI scenario, as in Fig.~\ref{Fig:superunsteer}, SDI steering is demonstrated if a superunsteerable correlation, which implies superunsteerability,  or a steerable correlation, which does not have an LHV-LHS model, is produced. 
\end{definition}
%In the given 1SSDI scenario, a bipartite quantum state has SDI steerability if and only if it can be used to produce a correlation that implies superunsteerability or does not have any LHV-LHS model. 
As an illustration, consider the Werner state given by
\be \label{Wers}
\rho_W=p \ketbra{\Phi^+}{\Phi^+} +(1-p) \frac{\openone_2 \otimes \openone_2}{4},
\ee 
where  $0 \le p \le 1$. $\rho_W$ is entangled if and only if $p > 1/3$; at the same time, it has non-zero discord for any $p>0$. On the other hand, in the two-setting steering scenario, $\rho_W$ has steerability for $p>1/\sqrt{2}$, while it has superunsteerability for $0 < p \le 1/\sqrt{2}$ \cite{DDJ+18,JD23}. Thus, $\rho_W$ has SDI steerability for any $p>0$.

%Note that the nonlinear witness of superunsteerability $Q$ given by Eq.~(\ref{QC}) was originally introduced in Ref.~\cite{BQB14} to certify the dimension of uncharacterized systems in the prepare-and-measure scenario, where  $\{p(b|a;x,y)\}$ is observed as a temporal correlation. Taking into account that the preparation and measurement devices are independent, it was shown that a nonzero $Q$ in Eq.~(\ref{QC}) provides the certification of the dimension of the relevant system. Subsequently, in Ref. \cite{LBL+15}, the nonlinear witness was used as a quantumness witness of temporal correlation produced in the prepare-and-measure scenario in an SDI way, i.e., by assuming only the Hilbert-space dimension of the devices. In this context of observing quantumness, shared randomness $\lambda$ cannot be used between the preparation and measurement devices because the set of temporal correlations that give $Q=0$ is nonconvex.  

 In the context of a 1SSDI scenario, as shown in Fig.~\ref{Fig:superunsteer}, we have the following assumption.
\begin{observation}\label{FOSR}
 In the given 1SSDI scenario, to observe SDI steerability of discordant states that are not steerable in the 1SDI context, shared randomness $\lambda$ is not freely available.
\end{observation}
This assumption follows because the set of nonsuperunsteerable correlations is non-convex~\cite{DBD+18,JD23,JDS_PRA23}. If shared randomness is freely available in the context of a 1SSDI scenario, then the correlations that have an LHV-LHS model do not certify SDI steerability. Thus, entanglement is required to demonstrate SDI steering if shared randomness is freely available.

\section{Global coherence}\label{GCdefRT}
The quantum coherence of a state represents superposition with respect to a fixed basis. Different measures of quantum coherence have been studied extensively in the literature \cite{SAP17}.
For a single system, fixing a reference basis $\{\ket{i}\}$, a state $\rho$ of the system is said to have vanishing coherence if it is diagonal in the reference basis, i.e., 
\be 
\rho =\sum_i p_i \ketbra{i}{i},
\ee
with probabilities $p_i$.
In~\cite{BCP14}, the relative entropy of coherence was introduced as a coherence measure, analogous to the relative entropy of entanglement, which quantifies entanglement \cite{VP98}. 
 The quantum relative entropy of a state $\rho$ with respect to a state $\sigma$ is defined as
$S(\rho\vert\vert\sigma) \equiv -\tr{\rho\log_2\sigma}-S(\rho)$, where
$S(\rho)=-\tr{\rho\log_2\rho}$.
The relative entropy of coherence ${\cal C}^r(\rho)$ of a state $\rho$ with respect to a basis is then defined as 
\begin{equation}\label{eqc}
{\cal C}^r(\rho) \equiv \min_{\sigma \in {\cal I}} S(\rho\vert\vert\sigma),
\end{equation}
where $ {\cal I}$ is the set of totally incoherent (diagonal) states in that basis.

 Moving to the bipartite systems, fixing a reference product basis $\{\ket{i}^A \otimes \ket{j}^B\}$,  a bipartite state is incoherent (IC) if it can be written as follows \cite{BCA15,SSD+15}: 
\be \label{IC}
\rho^{IC}_{AB}=\sum_k p_k \tau^A_k \otimes \upsilon^B_k, 
\ee
where $p_k$'s are probabilities,  $\tau^A_k$ and $\upsilon^B_k$ are incoherent states on subsystems $A$ and $B$, respectively,
 i.e., $\tau^A_k=\sum_i p'_{ik} \ketbra{i}{i}$ and $\upsilon^B_k=\sum_j p''_{jk} \ketbra{j}{j}$ for probabilities  $p'_{ik}$ and $p''_{jk}$. Otherwise, it is said to have \textit{bipartite coherence}. The relative entropy of coherence can also be used to quantify bipartite coherence \cite{GZ18}. 
 The set of incoherent states forms a subset of all separable states.  As in the case of incoherent operations, which are free operations in a resource theory of coherence for single systems \cite{BCP14}, incoherent operations on bipartite states, described by a CPTP map, transform any incoherent bipartite state into another incoherent bipartite state. 
 These incoherent operations are defined globally, and they do not define the free operations in the resource theory of coherence in bipartite scenarios. The free operations of coherence in bipartite scenarios were studied in~\cite{SRB+17}.

 Quantumness due to a nonzero discord has been studied using the notion of quantum coherence \cite{ABC16,HHW+18}.  Any incoherent bipartite state has zero discord in both ways; on the other hand, a nonzero discord in one way or both ways always implies bipartite coherence. More specifically, any discordant state has bipartite coherence in a basis-free form \cite{YXG+15},  because it is not a CC state (\ref{CCIC}) with respect to any product basis.

As argued in~\cite{GLH+12}, the quantumness of any discordant state that has a decomposition as in Eq.~(\ref{GLHd})  is due to local coherence in the subsystem $A$ or $B$ or both simultaneously.
We now obtain the following observation to identify all bipartite states that have bipartite coherence locally.
\begin{observation}
All discordant states in $\mathbb{C}^{d_A}\otimes\mathbb{C}^{d_B}$, which have a decomposition as in Eq.~(\ref{GLHd}) with $d_\lambda \le d_{\min}$, form a strict subset of all discordant states in $\mathbb{C}^{d_A}\otimes\mathbb{C}^{d_B}$ with $L_R \le d_{\min}$. 
\end{observation}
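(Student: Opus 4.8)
The plan is to prove the strict inclusion by establishing the two directions separately: first that every discordant state admitting a decomposition of the form~(\ref{GLHd}) with $d_\lambda \le d_{\min}$ indeed has $L_R \le d_{\min}$, and then that the reverse fails, i.e.\ that at least one discordant state with $L_R \le d_{\min}$ admits no such decomposition. The first direction is the routine one; the genuine content is the strictness.

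For the inclusion I would begin from $\rho_{AB}=\sum_{\lambda=0}^{d_\lambda-1} p_\lambda\,\rho^A_\lambda\otimes\rho^B_\lambda$ with $d_\lambda\le d_{\min}$ and expand each factor in the fixed Hermitian operator bases $\{A_n\}$ and $\{B_m\}$ used to define the correlation matrix $R$ in~(\ref{LR}). Writing $\rho^A_\lambda=\sum_n \alpha^{(\lambda)}_n A_n$ and $\rho^B_\lambda=\sum_m \beta^{(\lambda)}_m B_m$ yields $r_{nm}=\sum_\lambda p_\lambda\,\alpha^{(\lambda)}_n\beta^{(\lambda)}_m$, so that $R=\sum_\lambda p_\lambda\,\vec\alpha^{(\lambda)}(\vec\beta^{(\lambda)})^{\top}$ is a sum of $d_\lambda$ rank-one matrices. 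Hence $L_R=\mathrm{rank}(R)\le d_\lambda\le d_{\min}$, which is exactly the bound already quoted from Ref.~\cite{GLH+12}. As these states are discordant by hypothesis, the first set is contained in the second.

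The strict part turns on the observation that $L_R\le d_{\min}$ only bounds the number of \emph{product Hermitian operators} in the operator-Schmidt decomposition~(\ref{LR})---whose factors $S_n,F_n$ need not be positive---whereas~(\ref{GLHd}) demands a genuine convex decomposition into at most $d_{\min}$ product \emph{states}. To produce an element of the larger set outside the smaller one I would exhibit an entangled state of small correlation rank: it is automatically discordant, yet it cannot be written in the separable form~(\ref{GLHd}) at all. The cleanest instance is a Schmidt-rank-two pure state, e.g.\ $\tfrac{1}{\sqrt2}(\ket{00}+\ket{11})$ regarded in $\mathbb{C}^{d}\otimes\mathbb{C}^{d}$ with $d\ge4$: its operator-Schmidt rank is the square of its Schmidt rank, so $L_R=4\le d=d_{\min}$, and being entangled it lies in the second set but not the first.

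The step I expect to be the main obstacle is making strictness robust in \emph{all} relevant dimensions rather than only for one convenient example. For the entangled witness the exclusion from~(\ref{GLHd}) is immediate, since~(\ref{GLHd}) is manifestly separable, so only the correlation-rank count needs checking. In low dimensions where every entangled state already satisfies $L_R>d_{\min}$ (such as two qubits), one must instead produce a \emph{separable} discordant state with $L_R\le d_{\min}$ whose minimal convex decomposition into product states nevertheless requires strictly more than $d_{\min}$ terms. Verifying that no $d_{\min}$-term product-state decomposition exists is the delicate point, and I anticipate it needing a rank- or cardinality-counting argument that controls all separable decompositions simultaneously, rather than a single explicit calculation.
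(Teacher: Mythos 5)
Your inclusion direction is correct and in fact more self-contained than the paper's, which simply cites Ref.~\cite{GLH+12}: expanding each product term of Eq.~(\ref{GLHd}) in the operator bases and writing $R=\sum_\lambda p_\lambda\,\vec\alpha^{(\lambda)}(\vec\beta^{(\lambda)})^{\top}$ as a sum of $d_\lambda$ rank-one matrices cleanly gives $L_R\le d_\lambda\le d_{\min}$. The genuine gap is in strictness, precisely where you yourself flagged trouble. Your entangled witness (a Schmidt-rank-two pure state with $L_R=4$) exists only when $d_{\min}\ge4$, and it is in tension with the intended scope of the observation: the paper's own proof asserts, citing \cite{GLH+12}, that for $d_A=d_B=d$ \emph{all} states with $L_R\le d$ admit a decomposition as in Eq.~(\ref{GLHd}) -- a claim that only makes sense restricted to separable states, which is the operative setting here (your entangled example, if admitted, would falsify that claim and also Observation~\ref{CQQCnSsep}). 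Worse, your fallback plan for low dimensions chases a counterexample that does not exist: for two qubits (equal dimensions) the same cited result says every separable state with $L_R\le2$ does decompose with $d_\lambda\le2$, so the ``rank- or cardinality-counting argument'' you anticipate cannot succeed there. Strictness in this observation is a phenomenon of \emph{unequal} local dimensions, witnessed by separable CQ/QC states, and your proposal establishes it in none of those cases.

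The paper's route is to take $d_A\ne d_B$ and exhibit the trine QC state in $\mathbb{C}^{2}\otimes\mathbb{C}^{3}$, $\rho_{AB}=\frac{1}{3}\sum_{\lambda=0}^{2}W_\lambda\otimes\ketbra{\lambda}{\lambda}$ of Eq.~(\ref{QCng}): since $D^{\leftarrow}(\rho_{AB})=0$ it has $L_R=2=d_{\min}$, it is discordant the other way, yet it has no local creation of discord \cite{BPP15} and hence no decomposition as in Eq.~(\ref{GLHd}) with $d_\lambda\le2$. The verification you worried about is elementary for this state: conditioning on the orthonormal flags $\ket{\lambda}$, a two-term product decomposition would force the three distinct pure trine states $W_\lambda$ to be convex combinations of two fixed qubit states, which is impossible because a chord of the Bloch ball meets its surface in at most two points. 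To repair your proof, restrict the strictness claim to $d_A\ne d_B$ (equivalently, to the separable setting the paper works in) and replace your entangled witness with a CQ/QC counterexample of this kind.
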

 To prove the above observation, note that if $d_A \ne d_B$, the set of separable states in $\mathbb{C}^{d_A}\otimes\mathbb{C}^{d_B}$, which have a decomposition as in Eq.~(\ref{GLHd}), forms a subset of all separable states in $\mathbb{C}^{d_A}\otimes\mathbb{C}^{d_B}$ with $L_R \le d_{\min}$ \cite{GLH+12}. This can be seen as follows. If $d_A=d_B=d$, all states with $L_R \le d_{\min}$ can be decomposed as in Eq.~(\ref{GLHd}) with $d_{\min}=d$. 
On the other hand, if  $d_A < d_B$ or  $d_A > d_B$, there are discordant states with $L_R \le d_{\min}$ that do not have a local creation of discord~ \cite{GLH+12}. We will give an example of such a state. To this end, denote by
    \begin{equation} \label{eq:desc_sep_1}
    \ket{\theta,\phi} := \cos\left(\frac{\theta}{2}\right) \ket{0}
                        + \exp(i \phi)\sin\left(\frac{\theta}{2}\right) \ket{1} \,,
    \end{equation}
an arbitrary pure state in $\mathbb{C}^2$, then the set ${\mathcal{W}=\{\ket{0,0},\ket{\frac{2\pi}{3},0},\ket{\frac{2\pi}{3},\pi}\}}$ can be used to define a QC state in $\mathbb{C}^{2}\otimes\mathbb{C}^{3}$ as follows:
    \begin{equation} \label{QCng}
    \rho_{AB} = \frac{1}{3}\sum_{\lambda=0}^2 W_\lambda \otimes \ketbra{\lambda}{\lambda}   \,,
    \end{equation}
    with ${W_\lambda=\ket{w_\lambda}\bra{w_\lambda}}$ and ${\ket{w_\lambda}\in\mathcal{W}}$. The above discordant state does not have a local creation of discord \cite{BPP15}, yet it has $L_R =2$ since it has $D^{\leftarrow}(\rho_{AB})=0$. 

 We now make the following observation.
 \begin{observation} \label{CQQCnSsep}
     All discordant states in $\mathbb{C}^{d_A}\otimes\mathbb{C}^{d_B}$ with $L_R \le d_{\min}$,  which do not have a decomposition as in Eq. (\ref{GLHd}) with $d_\lambda \le d_{\min}$, is a CQ state if $d_A > d_B$ or a QC state if $d_A < d_B$. For such a CQ (QC) state, the dimension of the support of its reduced state $\rho_A$ ($\rho_B$) is greater than that of $\rho_B$ ($\rho_A$). 
 \end{observation}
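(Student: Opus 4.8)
The plan is to classify all states meeting the three hypotheses of the statement---discordant, $L_R\le d_{\min}$, and not expressible as in Eq.~(\ref{GLHd}) with $d_\lambda\le d_{\min}$---by first reducing to the case of unequal local dimensions and then counting the product terms of a classical--quantum (CQ) or quantum--classical (QC) decomposition. First I would dispose of $d_A=d_B$: by the reasoning behind the preceding observation, every state with $L_R\le d_{\min}$ then already admits a decomposition of the form~(\ref{GLHd}) with $d_\lambda=d$, so no state satisfies all three hypotheses and the claim is vacuously true (consistent with the fact that neither $d_A>d_B$ nor $d_A<d_B$ holds). Hence I may assume $d_A\neq d_B$ and, using the $A\leftrightarrow B$ symmetry of the statement, treat $d_A<d_B$ (so $d_{\min}=d_A$); the case $d_A>d_B$ follows by relabelling and produces the CQ conclusion.

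The heart of the argument is to show that such an exceptional $\rho$ must be a QC state. I would prove the contrapositive: a state with $L_R\le d_{\min}$ that is neither CQ nor QC is necessarily of the form~(\ref{GLHd}) with $d_\lambda\le d_{\min}$. Here I would first pass to the supports, using $\mathrm{supp}(\rho)\subseteq\mathrm{supp}(\rho_A)\otimes\mathrm{supp}(\rho_B)$ so that $\rho$ may be regarded as a state on $\mathbb{C}^{r_A}\otimes\mathbb{C}^{r_B}$ with full-rank marginals, $r_A=\dim\mathrm{supp}(\rho_A)$ and $r_B=\dim\mathrm{supp}(\rho_B)$. Then I would combine the singular-value/correlation-matrix structure of Eq.~(\ref{LR}) \cite{DVB10} with the balanced-dimension result of \cite{GLH+12} (the $d_A=d_B$ instance above) to regroup the operator-Schmidt expansion into at most $d_{\min}$ product density operators. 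Entangled states should be excluded at this stage because entanglement forces the correlation rank on the support to exceed the minimal support dimension. The genuinely delicate point---and the step I expect to be the main obstacle---is converting the (generally non-positive) operators $S_n,F_n$ of Eq.~(\ref{LR}) into bona fide product density operators while keeping the number of terms at $d_{\min}$; reconciling the ambient bound $L_R\le d_{\min}$ with the support dimensions $r_A,r_B$ is exactly where the CQ/QC states fail to be regrouped and thus emerge as the exceptions.

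With $\rho$ known to be QC, say $\rho=\sum_i p_i\,\rho_i^A\otimes\ketbra{i}{i}^B$ in some orthonormal basis $\{\ket{i}\}$ of $B$, the direction and the support inequality follow by counting terms. Such a canonical QC decomposition uses exactly $\dim\mathrm{supp}(\rho_B)$ orthonormal projectors on $B$, hence at most $d_B$ terms, whereas any CQ state uses at most $d_A$ terms \cite{BPP15}. Since $d_A=d_{\min}$, a CQ state would already be of the form~(\ref{GLHd}) with $d_\lambda\le d_{\min}$, contradicting the hypothesis; this is why the exceptional state is QC and not CQ (and symmetrically CQ, not QC, when $d_A>d_B$). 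Finally, being \emph{not} of the form~(\ref{GLHd}) forces the number of terms to exceed $d_{\min}$, i.e.\ $\dim\mathrm{supp}(\rho_B)>d_{\min}=d_A\ge\dim\mathrm{supp}(\rho_A)$, which is precisely the asserted inequality between the support dimensions of the two reduced states.
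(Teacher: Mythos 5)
Your skeleton matches the paper's argument, and the two counting steps are correct and essentially identical to it: the $d_A=d_B$ case is vacuous by the regrouping fact already used for the preceding observation; the direction claim follows because a CQ state has at most $d_A$ product terms, so when $d_A=d_{\min}$ it is automatically of the form of Eq.~(\ref{GLHd}) with $d_\lambda\le d_{\min}$; and the support inequality follows from the same count applied contrapositively --- a QC state's canonical decomposition has exactly $\dim\mathrm{supp}(\rho_B)$ terms, so failure of Eq.~(\ref{GLHd}) forces $\dim\mathrm{supp}(\rho_B)>d_{\min}=d_A\ge\dim\mathrm{supp}(\rho_A)$. This last move is precisely the paper's own second sentence, which notes that a CQ (QC) state whose $\rho_A$ ($\rho_B$) support is not larger than that of $\rho_B$ ($\rho_A$) does admit a decomposition as in Eq.~(\ref{GLHd}) with $d_\lambda\le d_{\min}$.

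The genuine gap is exactly the step you flag as ``the main obstacle,'' and it should not be an obstacle at all: the dichotomy ``$L_R\le d_{\min}$ and neither CQ nor QC implies a decomposition as in Eq.~(\ref{GLHd}) with $d_\lambda\le d_{\min}$'' is not proved in the paper either --- it is imported wholesale from Ref.~\cite{GLH+12}, whose result (local creatability of the discord, equivalently the form of Eq.~(\ref{GLHd})) holds for arbitrary, unbalanced dimensions, not only for $d_A=d_B$. You instead invoke only the balanced instance and try to bootstrap it through a support restriction plus an operator-Schmidt regrouping, leaving the positivity/regrouping step unresolved; but that step \emph{is} the content of the cited theorem, so as written the plan is incomplete, and partly circular: your auxiliary claim that entanglement forces the correlation rank to exceed the minimal support dimension is itself a corollary of the full dichotomy (every state with $L_R\le d_{\min}$ lands in one of the separable classes CQ, QC, or Eq.~(\ref{GLHd})) and cannot be used as an independent input. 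Moreover, your reduction to supports does not go through as sketched, for the reason you yourself half-identify: after restriction the hypothesis you retain is the ambient bound $L_R\le d_{\min}$, whereas applying the balanced-case result on $\mathbb{C}^{r_A}\otimes\mathbb{C}^{r_B}$ would require $L_R\le\min(r_A,r_B)$, and these bounds can differ --- indeed the QC counterexample of Eq.~(\ref{QCng}) sits exactly in that mismatch. The fix is simply to cite the full statement of Ref.~\cite{GLH+12}, as the paper does, after which your counting arguments complete the proof.
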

 This observation follows because if the state is neither a CQ state nor a QC state, it has a local creation of discord as shown in~\cite{GLH+12}  or $L_R > d_{\min}$. On the other hand, if the dimension of the support of the reduced state $\rho_A$ ($\rho_B$) of the CQ (QC) state mentioned in the above observation is not greater than that of $\rho_B$ ($\rho_A$), it has a decomposition as in Eq. (\ref{GLHd}) with $d_\lambda \le d_{\min}$.

 We have now obtained the following observation.
 \begin{observation}\label{lcs}
     For any given separable state in $\mathbb{C}^{d_A}\otimes\mathbb{C}^{d_B}$ with $L_R \le d_{\min}$,  its bipartite coherence is locally present in a subsystem or in both subsystems. 
 \end{observation}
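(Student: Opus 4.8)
The plan is to prove Observation~\ref{lcs} by an exhaustive case analysis that uses the first observation together with Observation~\ref{CQQCnSsep} to partition the separable states with $L_R \le d_{\min}$, and then to exhibit the local-coherence structure in each class. The starting point is the fact recalled from Ref.~\cite{GLH+12} just before the observations: every discordant state admitting a decomposition of the form (\ref{GLHd}) with $d_\lambda \le d_{\min}$ has its quantumness, and hence its bipartite coherence, arising from local coherence in subsystem $A$, in subsystem $B$, or in both. I would take this as the base case.

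Next I would split the separable states with $L_R \le d_{\min}$ into three exhaustive and essentially disjoint classes. States that are incoherent (CC states) carry no bipartite coherence at all, so the claim holds vacuously. Discordant states that admit the decomposition (\ref{GLHd}) with $d_\lambda \le d_{\min}$ are covered by the base case. By Observation~\ref{CQQCnSsep}, the only remaining possibility for a discordant state with $L_R \le d_{\min}$ is a CQ state (when $d_A > d_B$) or a QC state (when $d_A < d_B$) that does not admit such a decomposition; these are precisely the ``gap'' states lying between the two sets compared in the first observation, and together the three classes exhaust all separable states with $L_R \le d_{\min}$.

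The remaining work is to show that the gap CQ/QC states have only local coherence. For a gap QC state with $d_A < d_B$ I would write it as $\sum_j \rho_j^A \otimes \ketbra{j}{j}^B$ with $\{\ket{j}\}$ an orthonormal family on $B$; the bipartite coherence then resides in the coherent factors $\rho_j^A$, and at most additionally in $B$ when $\{\ket{j}\}$ differs from the reference basis, so the coherence is local in $A$ or in both subsystems. The state (\ref{QCng}) is the clean illustration: there the $B$-factors are reference-basis projectors and all coherence sits in the $A$-factors $W_\lambda$. The CQ case with $d_A > d_B$ is handled symmetrically, with the roles of $A$ and $B$ exchanged, giving local coherence in $B$ or in both. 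Assembling the classes yields the observation.

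The hard part, I expect, is making ``local coherence in both subsystems'' precise for the gap states and verifying that it genuinely excludes the global coherence that (by definition) accompanies superseparability. Concretely, I must argue that the mutual orthogonality of the classical factors in the CQ/QC decomposition---the feature responsible for $L_R \le d_{\min}$ rather than $L_R > d_{\min}$---prevents the coherence from being spread across the bipartition in the correlated manner characteristic of superseparable and entangled states. This is the step where the bound $L_R \le d_{\min}$, together with the support-dimension asymmetry supplied by Observation~\ref{CQQCnSsep}, must be used essentially rather than incidentally.
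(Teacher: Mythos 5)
Your argument is essentially the paper's own proof: the paper likewise disposes of CQ and QC states by noting their bipartite coherence is contained locally in the quantum-side subsystem, and handles all remaining discordant states with $L_R \le d_{\min}$ (necessarily two-way discordant, by Observation~\ref{CQQCnSsep}) via the fact from Ref.~\cite{GLH+12} that they can be created by a local operation $\Phi_A \otimes \Phi_B$ on a CC state, so their coherence is local to one or both subsystems. The only real deviation is your closing paragraph: the ``hard part'' you anticipate is unnecessary, because Observation~\ref{lcs} precedes and motivates Definition~\ref{DefGC}, so ``locally present'' here means exactly the structural property you already exhibited (quantum-side coherence of a CQ/QC state, or coherence generated locally from a CC state), and no separate argument excluding global coherence is required at this point.
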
 
 The above observation follows because the bipartite coherence of any CQ or QC state is contained locally in the subsystem $B$ or $A$, respectively.
 On the other hand, all other states with $L_R \le d_{\min}$, which have nonzero discord in both ways, can be created by a local operation on a CC state \cite{GLH+12}.
Using Obs.~\ref{lcs}, we can define the global coherence of bipartite states as follows.
\begin{definition} \label{DefGC}
 A bipartite state, $\rho_{AB}$ in $\mathbb{C}^{d_A}\otimes\mathbb{C}^{d_B}$, has global coherence if and only if it is neither a CQ state nor a QC state and its discord cannot be generated by a local operation of the form $\Phi_A \otimes \Phi_B$ on a CC state given by Eq.~ (\ref{CCIC}) with $q_{ij}= \delta_{ij} \tilde q_i$. 
\end{definition}
Note that all entangled states have global coherence because they are not separable to have bipartite coherence locally to any of the subsystems or both subsystems. At the same time, global coherence goes beyond entanglement. For example, the Werner state $\rho_W$ given by Eq. (\ref{Wers}), which is entangled if and only if $p > 1/3$,  has $L_R > 2$ for any $p>0$. Thus, $\rho_W$ has global coherence for any $p>0$.

In the case of global coherence as a resource, the free states are all incoherent bipartite states and bipartite coherent states whose coherence is present locally, i.e. states with $L_R \le d_{\min}$, and the free operations are given by any local operation of the form $\Phi_A \otimes \Phi_B$, where $\Phi$ is any CPTP map because such operations cannot create the resource of global coherence from any free state.  To illustrate how the free operations of global coherence cannot generate a resource state from a free state, we have already seen an example in which, applying the local operation $\Phi \otimes \Phi$, to both qubits of the CC state given by Eq.~(\ref{CC}), we get the discordant state given by Eq.~(\ref{LC_{A,B}}), which, however, does not have global coherence.
 On the other hand, a globally coherent state can be generated locally from the CC state (\ref{CC}) by applying a local operation of the form given by $\sum_\lambda p_\lambda  \Phi_\lambda \otimes \Phi_\lambda  $ (each $\Phi_\lambda$ is a CPTP map and $p_\lambda \ge 0$, with $\sum_\lambda p_\lambda =1$).
  Note that such local operations and shared randomness are free operations in the case of entanglement as a resource. For comparison of free operations of global coherence in quantum discord with that of entanglement theory, please refer to Eq.~($18$) in~\cite{CG19}. For the resource theory of discord, free operations have also been studied in~\cite{HFZ+12, LHL17}. These free operations are more restrictive than those of global coherence, since the former should not create quantum discord locally. Like in the case of quantum discord, the free states of global coherence of quantum discord are also nonconvex.

Based on any contractive pseudo-distance measure \cite{Bengtsson_Życzkowski_2017}, one can construct a measure of global coherence. Given a bipartite state $\rho_{AB}$, a contractive pseudo-distance measure of global coherence  ${\cal G}^\delta(\rho_{AB})$  is defined as 
  \be
  {\cal G}^\delta(\rho_{AB})=\min_{\chi_{AB} \in \mathcal{F}_{G}} \delta (\rho_{AB},\chi_{AB}),
  \ee
  where  $\mathcal{F}_{G}$ denotes the set of free states in the resource theory of global coherence and  $\delta$ is the pseudo-distance measure required to be contractive under quantum operations, i.e.,
  \be \label{Dmeaure}
   \delta (\Phi [\rho],\Phi [\zeta]) \le \delta(\rho,\zeta),
  \ee
where $\Phi$ is any CPTP map. This implies that ${\cal G}^\delta$ does not increase under any local operation of the form $\Phi_A \otimes \Phi_B$, which is a free operation in the case of global coherence as a resource. 

We proceed to study the relationship between the resource frameworks of global coherence of bipartite and single-partite states.
To this end, we consider the protocol for converting the coherence of a single system into discord of a bipartite system under a bipartite incoherent operation, as studied in~\cite{MYG+16}. 
Suppose a coherent state $\rho_A$ is coupled to an initially incoherent ancilla $\tau_B$ and an incoherent operation $\Lambda_{IC}$ is performed on the joint state $\rho_{AB}=\rho_A \otimes \tau_B$. Then, there exists a bipartite incoherent operation $\Lambda_{IC}$ for which coherence can be converted into entanglement as shown in~\cite{SSD+15}. At the same time, discord without entanglement can also be generated~\cite{MYG+16}, as in the case of the conversion of coherence into entanglement using a bipartite incoherent operation. 
For instance, consider that the  bipartite incoherent operation $\varepsilon(\rho_{AB})=p U_{\mathrm{CX}}\rho_{AB} U_{\mathrm{CX}} + (1-p) \frac{\openone_4}{4} $, where $U_{\mathrm{CX}}$ is the controlled-NOT gate with $U_{\mathrm{CX}}(\ket{i}\otimes\ket{j})=\ket{i}\otimes\ket{i\oplus j}$. If this incoherent operation is performed on the joint state $\rho_{AB}=\ketbra{+}{+} \otimes \ketbra{0}{0}$, then the joint state is transformed into a Werner state $\rho_W$. Note that in this context, discord without entanglement is generated by tuning $ 0<p \le 1/3$.

There is also a bound on the amount of discord generated in the aforementioned protocol to convert coherence into discord~\cite{MYG+16}. To state this bound,
let ${\cal C}^\delta$ denote a contractive pseudo-distance measure of coherence of a single-partite state $\rho$ given by
  \be
  {\cal C}^\delta(\rho)=\min_{\sigma \in \cal {I}} \delta (\rho,\sigma),
  \ee
such as in Eq.~(\ref{eqc}), and ${\cal D}^\delta$ denote the  contractive pseudo-distance measure of discord, i.e.,
\be
  {\cal D}^\delta(\rho_{AB})=\min_{\sigma_{AB} \in \mathcal{C}} \delta (\rho_{AB},\sigma_{AB}),
  \ee
  where  $\mathcal{C}$ denotes the set of CC states given by Eq. (\ref{CCIC}).
Then, the amount of discord generated satisfies the following inequality:
\be \label{sc-gcd}
{\cal D}^\delta(\Lambda_{IC}(\rho_A \otimes \tau_B)) \le {\cal C}^\delta(\rho_A),
\ee  
as shown in~\cite{MYG+16}.

Suppose $\Lambda_{IC}(\rho_A \otimes \tau_B)$ is a separable state and has ${\cal D}^\delta>0$. Then this discordant state necessarily has ${\cal G}^\delta>0$. This follows because the coherence of the subsystem must be consumed to create the nonzero discord, as was also demonstrated in Result $2$ in~\cite{MYG+16}. Next, note that 
\be \label{GleD}
{\cal G}^\delta(\rho_{AB})\le {\cal D}^\delta(\rho_{AB}),
\ee
which follows due to the contractivity of $\delta$, and there are discordant states for which ${\cal D}^\delta(\rho_{AB})>0$, but ${\cal G}^\delta(\rho_{AB})=0$. 
Combining Eqs.~(\ref{sc-gcd}) and (\ref{GleD}), we have now obtained the following relationship between ${\cal G}^\delta$ and ${\cal C}^\delta$.
\begin{observation}
  For any contractive distance $\delta$, the amount of global coherence created between a state $\rho_A$ and an incoherent ancilla $\tau_B$  by an incoherent operation $\Lambda_{IC}$ satisfies the inequality:
\be \label{sc-gc}
{\cal G}^\delta(\Lambda_{IC}(\rho_A \otimes \tau_B)) \le {\cal C}^\delta(\rho_A).
\ee  
\end{observation}
Here, we have derived the inequality in Eq. (\ref{sc-gc}) to establish a strong link between the resource frameworks of global coherence of bipartite and single-partite states, extending the result of~\cite{SSD+15}, which demonstrated a strong link between the resource frameworks of entanglement and coherence in single systems. The inequality in Eq. (\ref{sc-gcd}) is only quantitative to capture genuinely quantum correlation generated; on the other hand, our inequality in Eq. (\ref{sc-gc}) captures both qualitative and quantitative aspects of genuinely quantum correlation generated in the protocol.

\section{Operational tasks requiring global coherence} \label{OTGR}

In the following, we explore the relationship between global coherence and  
SDI steerability in 1SSDI scenarios, as in Fig.~\ref{Fig:superunsteer}. We consider these steering tasks because shared randomness is not a free resource for such operational tasks, as mentioned in Obs.~\ref{FOSR}; at the same time, the free operations of global coherence refer to any local operation without shared randomness. We will also explore the role of global coherence in the task of remote state preparation using quantum discord ~\cite{DLM+12} because the task has to use only local operations freely without shared randomness~\cite{Gio13}.  

\subsection{Semi-device-independent (SDI) steering}
Here we demonstrate that global coherence is required to demonstrate SDI steerability in the context of a 1SSDI scenario, as in Fig.~\ref{Fig:superunsteer}. To this end, in the following, we prove two lemmas on superunsteerability to obtain our main result.
 \begin{lem} \label{lem1}
If a separable state in $\mathbb{C}^{d_A}\otimes\mathbb{C}^{d_B}$ has a decomposition as in Eq.~(\ref{GLHd}), it cannot be used to demonstrate superunsteerability.
\end{lem}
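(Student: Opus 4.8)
The plan is to exploit the product-state decomposition directly: a decomposition of the form (\ref{GLHd}) with $d_\lambda \le d_{\min}$ furnishes, for \emph{every} choice of local measurements, an explicit local model for the resulting box using only $d_\lambda$ hidden-variable values, so that by the relevant definitions neither superlocality nor superunsteerability can be witnessed.

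First I would fix arbitrary POVMs $\{M^A_{a|x}\}$ and $\{M^B_{b|y}\}$ on the two sides and write the box produced by $\rho_{AB}$ through the Born rule, $p(a,b|x,y)=\tr[(M^A_{a|x}\otimes M^B_{b|y})\rho_{AB}]$. Substituting (\ref{GLHd}) and using the product structure of each term gives
\begin{equation}
p(a,b|x,y)=\sum^{d_\lambda-1}_{\lambda=0} p_\lambda\, \tr[M^A_{a|x}\rho^A_\lambda]\,\tr[M^B_{b|y}\rho^B_\lambda].
\end{equation}
For the superlocality claim I would set $p(a|x,\lambda):=\tr[M^A_{a|x}\rho^A_\lambda]$ and $p(b|y,\lambda):=\tr[M^B_{b|y}\rho^B_\lambda]$; normalization of the POVMs together with normalization of the states $\rho^A_\lambda,\rho^B_\lambda$ guarantees $\sum_a p(a|x,\lambda)=\sum_b p(b|y,\lambda)=1$, so these are genuine conditional distributions. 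The box is then exactly of the LHV form (\ref{LHV-LHV}) with hidden-variable dimension $d_\lambda\le d_{\min}$, which by definition precludes superlocality.

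For the superunsteerability claim I would keep $p(a|x,\lambda):=\tr[M^A_{a|x}\rho^A_\lambda]$ as the (arbitrary) response on Alice's untrusted side and identify the local hidden states with Bob's factors, $\rho_\lambda:=\rho^B_\lambda$, so that $p(b|y;\rho_\lambda)=\tr[M^B_{b|y}\rho_\lambda]$ is a legitimate quantum response. This reproduces the LHV-LHS form (\ref{LHV-LHS}) with $d_\lambda\le d_{\min}\le d_A$, as required. The remaining point to verify is the marginal constraint $\sum_{a,\lambda} p_\lambda p(a|x,\lambda)\rho_\lambda=\rho_B$: summing over $a$ removes the $x$-dependence via $\sum_a p(a|x,\lambda)=1$, leaving $\sum_\lambda p_\lambda \rho^B_\lambda$, which is precisely the reduced state $\rho_B=\mbox{tr}_A\,\rho_{AB}$ obtained by tracing out Alice in (\ref{GLHd}). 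Hence the model is a valid LHV-LHS decomposition and superunsteerability is excluded as well.

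There is no genuine analytic obstacle; the argument is a direct construction, and the only care needed is bookkeeping: checking that the induced response functions are normalized probability distributions, matching the two dimension bounds (noting that $d_\lambda\le d_{\min}$ automatically implies the weaker steering bound $d_\lambda\le d_A$), and confirming the LHS marginal-consistency condition. I would also remark that the hypothesis ``not superseparable'' is in fact implied by the existence of the decomposition (\ref{GLHd}), since such a decomposition forces $L_R\le d_\lambda\le d_{\min}$; it is stated for emphasis and to delineate this case from the complementary CQ/QC case handled next.
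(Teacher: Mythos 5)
Your proposal is correct and follows essentially the same route as the paper's own proof: substituting the decomposition of Eq.~(\ref{GLHd}) into the Born rule to obtain an explicit LHV/LHV-LHS model with hidden-variable dimension $d_\lambda \le d_{\min}$, and noting that such states satisfy $L_R \le d_{\min}$. Your additional bookkeeping (normalization of the response functions, the LHS marginal-consistency condition, and $d_{\min} \le d_A$) simply makes explicit what the paper dismisses as ``readily seen.''
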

\begin{proof}
Consider any correlation arising from a separable state of the form (\ref{GLHd}) in the context of a 1SSDI scenario, as in Fig.~\ref{Fig:superunsteer}. It is readily seen that any such correlation admits the following decomposition:
\begin{equation} \label{LHVLR<=dmin}
p(a,b|x,y)=\sum^{d_\lambda-1}_{\lambda=0} p_{\lambda} p(a|x,\lambda) p(b|y, \lambda),
\end{equation}
with $d_\lambda \le d_{\min}$. Here $p(a|x,\lambda)=\tr \left\{M^A_{a|x} \rho^A_\lambda \right\}$ and $p(b|y,\lambda)=\tr \left\{M^B_{b|y} \rho^B_\lambda \right\}$. From the above decomposition, it then follows that any separable state, which has a decomposition as in Eq. (\ref{GLHd}), cannot be used to demonstrate superunsteerability, as it has an LHV-LHS. %Note that all separable states that have a decomposition as in Eq. (\ref{GLHd}) have no global coherence since they have $L_R \le d_{\min}$. 
\end{proof}

As we have noted before, the set of separable states in $\mathbb{C}^{d_A}\otimes\mathbb{C}^{d_B}$, which have a decomposition as in Eq.~(\ref{GLHd}), forms a subset of all separable states in $\mathbb{C}^{d_A}\otimes\mathbb{C}^{d_B}$ with $L_R \le d_{\min}$  \cite{GLH+12,BPP15}. Thus, there are discordant states that do not have a decomposition as in Eq. (\ref{GLHd}); at the same time, they do not have global coherence. Such states have the form of a CQ or QC state. For any of these states, it is not immediately apparent from their CQ or QC form whether they can be used to demonstrate superunsteerability or not. For example,  consider any correlation arising from the QC state given by Eq.~(\ref{QCng}).  It can be seen that there exists a correlation arising from such a QC state that can be decomposed as in Eq.~(\ref{LHVLR<=dmin}), but with $d_\lambda = 3$, which may not be the minimal dimension required to simulate the correlation. Thus, from this decomposition of the correlation, it is impossible to immediately conclude whether the correlation is superunsteerable or not.  To demonstrate that any CQ or QC state cannot be used to demonstrate superunsteerability, we make the following observation.
\begin{observation}\label{ObsVed}
  Suppose that any discordant state in $\mathbb{C}^{d_A}\otimes\mathbb{C}^{d_B}$ has no global coherence, i.e., $L_R \le d_{\min}$. Its nonzero discord is due to a nonvanishing classical correlation $C^\rightarrow$ as given in Eq.~(\ref{QDdef}) or $C^\leftarrow$, i.e., the presence of correlation in only a single basis, assisted with local coherence in a subsystem or both subsystems. \end{observation}
The above observation follows because such discordant states can be produced from a CC state (\ref{CCIC}) by locally generating coherence, or they contain coherence locally as a CQ or QC state. Therefore, the discord of these states implies the presence of correlation in only a single basis. 
Whereas, for discordant states with $L_R > d_{\min}$, their nonzero discord signifies a form of genuinely quantum correlation.  Such a form of genuinely quantum correlation indicates the presence of correlations in more than one basis, just as in the case of entangled states. 
As an illustration of correlation in more than one basis, note that a Bell state of a two-qubit system exhibits correlations in both bases of the Pauli matrices $\sigma_x$ and $\sigma_z$, implying steerability; at the same time, all discordant states that belong to the two-qubit Bell-diagonal states have correlations in more than one basis \cite{WMC+14}, implying that they have SDI steerability \cite{JDS24}. 

We now proceed to obtain the following lemma using Obs.~\ref{ObsVed}.
\begin{lem} \label{lem2}
Suppose that any given discordant state in $\mathbb{C}^{d_A}\otimes\mathbb{C}^{d_B}$  does not have a decomposition as in Eq.~(\ref{GLHd}), but it has $L_R \le d_{\min}$.  Then, it cannot be used to demonstrate superunsteerability.
\end{lem}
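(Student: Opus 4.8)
My plan is to first pin down the structure of $\rho_{AB}$ and then split into the two scenarios. By Observation~\ref{CQQCnSsep}, the three hypotheses---discordant, not superseparable ($L_R\le d_{\min}$), and no decomposition of the form~(\ref{GLHd})---force $\rho_{AB}$ to be a CQ state when $d_A>d_B$ or a QC state when $d_A<d_B$, with the classical wing carrying the larger marginal support. Treating the CQ case $\rho_{AB}=\sum_i p_i\,\ket{i}\bra{i}^A\otimes\rho_i^B$ (the QC case being symmetric), every box it produces reads $p(a,b|x,y)=\sum_i p_i\,u_i(a|x)\,v_i(b|y)$ with $u_i(a|x)=\bra{i}M^A_{a|x}\ket{i}$ and $v_i(b|y)=\tr(M^B_{b|y}\rho_i^B)$, i.e.\ the model~(\ref{LHVLR<=dmin}) but with $d_\lambda=r_A:=\dim\mathrm{supp}\,\rho_A$, which by Observation~\ref{CQQCnSsep} exceeds $\dim\mathrm{supp}\,\rho_B$.

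For superunsteerability in the sub-case where the untrusted wing is the classical one (the CQ state with Alice holding $A$), no further work is needed: reading the classical index $i$ as the hidden variable already yields the LHV--LHS form~(\ref{LHV-LHS}) with $p(a|x,\lambda)=u_i(a|x)$, local hidden states $\rho_i^B$, and $d_\lambda=r_A\le d_A$, so the steering constraint $d_\lambda\le d_A$ is met and the conclusion of Lemma~\ref{lem1} extends. The reason this half is immediate is that the steering bound is $d_A$, not $d_{\min}$, and the classical wing never has more than $d_A$ branches.

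The remaining cases---superlocality, whose bound is the symmetric $d_{\min}=\min\{d_A,d_B\}<r_A$, and superunsteerability when the untrusted wing is the quantum one---require compressing the hidden variable from $r_A$ down to at most $d_{\min}$ (resp.\ $d_A$) values. The route I would attempt rests on Observation~\ref{ObsVed}: the discord of $\rho_{AB}$ is correlation ``in a single basis'' dressed by local coherence, and every conditional state $\rho_i^B$ lies inside the $\dim\mathrm{supp}\,\rho_B\le d_{\min}$-dimensional support of $\rho_B$. I would try to show that the unnormalized assemblage $\sigma_{a|x}=\sum_i p_i u_i(a|x)\rho_i^B$ factors through a classical variable taking at most $d_{\min}$ values, turning the box into a genuine (positive, normalized) model~(\ref{LHV-LHV}) with $d_\lambda\le d_{\min}$.

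This compression is exactly where I expect the proof to stand or fall, and I would test it first on the witness state~(\ref{QCng}). Its conditional states are the three trine qubit states, which sit in a two-dimensional space, yet when Bob reads out the classical basis and Alice measures two complementary qubit observables the three branch-conditional response vectors are affinely independent, so they cannot be written as convex mixtures of only $d_{\min}=2$ extreme responses; equivalently, the induced assemblage spans three dimensions, whereas any $d_{\min}$-branch LHV--LHS model would confine its elements to a $d_{\min}$-dimensional span. For this state the box therefore seems to admit no model~(\ref{LHV-LHV}) with $d_\lambda\le d_{\min}$, which is in direct tension with the superlocality half of the statement (and, in the quantum-wing direction, with superunsteerability). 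Settling whether the CQ/QC structure together with $L_R\le d_{\min}$ can nonetheless force such a collapse---or whether these dimension-mismatched states must be handled by an additional hypothesis---is the central obstacle I would resolve before asserting the lemma in full generality.
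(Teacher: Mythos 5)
Your opening moves coincide with the paper's: the reduction via Obs.~\ref{CQQCnSsep} to a CQ/QC state with the classical wing on the larger side, and the immediate LHV--LHS model when the classical wing sits on the untrusted side (where, as you say, the bound is $d_A$ rather than $d_{\min}$), are exactly how the paper's proof begins, and both are fine. The ``compression'' you flag as the crux is precisely what the paper purports to supply, via a dephasing trick: with Kraus operators $K_i=\ketbra{\chi_i}{\phi_i}$ it maps $\rho_{\rm QC}$ to the CC state of Eq.~(\ref{IC_A}) and claims the original box is recovered with Alice's effects replaced by $\sum_i K_i^\dagger M^A_{a|x} K_i$. That claim is algebraically wrong: cyclicity of the trace yields $\sum_{i,i'} K_{i'}^\dagger K_i^\dagger M^A_{a|x} K_i K_{i'}$, not the paper's $\sum_{i,i'} K_i^\dagger K_{i'} M^A_{a|x} K_{i'}^\dagger K_i$, and the correct evaluation shows the CC state reproduces only the box of the \emph{dephased} effects $\sum_i \ketbra{\phi_i}{\phi_i} M^A_{a|x}\ketbra{\phi_i}{\phi_i}$. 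Agreement with the original box for all measurements would force every conditional state $\rho_i^A$ to be diagonal in $\{\ket{\phi_i}\}$, i.e.\ force $\rho_{\rm QC}$ to be CC, contradicting discord. So the paper does not close the gap you identified either: what the step actually requires is a \emph{nonnegative}-rank collapse of the branch behaviors, while the hypothesis $L_R\le d_{\min}$ bounds only the ordinary rank.

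Your trine test is computed correctly --- with two complementary qubit measurements the three conditional behaviors are affinely independent, so no model of the form (\ref{LHV-LHV}) or (\ref{LHV-LHS}) with $d_\lambda\le 2$ exists --- but the tension resolves at the level of the hypothesis rather than the conclusion: Eq.~(\ref{QCng}) in fact has $L_R=3$, not $2$, since $\rho_{AB}=\frac{1}{6}\left(\openone_2\otimes\openone_3+\sigma_1\otimes D_1+\sigma_3\otimes D_3\right)$ with linearly independent factors on both sides (the correct bound is $L_R\le d_A$ for CQ and $L_R\le d_B$ for QC states, not $L_R\le d_{\min}$; indeed for a QC state with a qubit quantum wing, discord forces noncommuting branches and hence $L_R\ge 3$). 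So that state is superseparable by the paper's own definition and escapes the lemma --- but your instinct is vindicated one dimension up: take four pure states with Bloch vectors $\pm\hat{x},\pm\hat{y}$ on a qubit subspace of $\mathbb{C}^3$, correlated with the computational basis of $\mathbb{C}^4$. This QC state is discordant, has $L_R=3=d_{\min}$, and admits no decomposition (\ref{GLHd}) with $d_\lambda\le 3$ (four distinct pure branches cannot be nonnegatively composed from three states); yet with Alice measuring $\sigma_1,\sigma_2$ and Bob reading the classical basis, the four conditional behaviors are the vertices of the diamond inscribed in $[0,1]^2$, and any triangle containing that diamond has area at least $1$ while any triangle inside $[0,1]^2$ has area at most $1/2$ --- so no model with $d_\lambda\le 3$ exists and the state is superlocal and superunsteerable. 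Your refusal to assert the lemma in full generality was therefore the right call: the compression step fails, and neither your route nor the paper's establishes the statement.
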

\begin{proof} 
As noted in Obs.~\ref{CQQCnSsep}, any given discordant state in $\mathbb{C}^{d_A}\otimes\mathbb{C}^{d_B}$ that  does not have a decomposition as in Eq.~(\ref{GLHd}) is a CQ state if $d_A > d_B$ or a QC state if $d_A < d_B$.  
Since any CQ or QC state has no global coherence and has only correlation in a single basis, as noted in Obs.~\ref{ObsVed}, any correlation arising from this state can be reproduced by a suitable CC state. We show this by first considering any QC state in $\mathbb{C}^{d_A}\otimes\mathbb{C}^{d_B}$, with $d_A < d_B$, that is discordant  given by 
\begin{equation}\label{QCDcc}
\rho_{\rm QC}=\sum_{i} q_{i} \rho^A_i  \otimes \ketbra{i}{i}^B,
\end{equation}
as mentioned in Obs.~\ref{CQQCnSsep}.
The QC state given by Eq. (\ref{QCDcc}) is not a CC state because it has a nonzero discord. An example of such a QC state is given by Eq.~(\ref{QCng}).  

The QC state (\ref{QCDcc}) can be transformed into a CC state, while retaining the same amount of the classical correlation $C^{\rightarrow}(\rho_{\rm QC})$, by applying a suitable decohering map on subsystem $A$. To see this, consider the decohering map,  
$\Phi_A=\sum_iK_i(\cdot)K^\dag_i $, with Kraus operators $K_i$ satisfying $\sum_i K^\dag_i K_i=\openone_{d_A}$ and $K_i=\ketbra{\chi_i}{\phi_i}$, where $\{\ket{\phi_i}\}$ is a basis that gives rise to the classical correlation $C^{\rightarrow}(\rho_{\rm QC})$ with projective measurement and $\{\ket{\chi_i}\}$ is any orthogonal basis. Applying this map on $\rho_{\rm QC}$,  we get  the  state, $\rho_{\rm CC}$, as follows: 
\begin{equation} \label{IC_A}
\rho_{\rm CC}=\sum_i (K_i \otimes \openone_{d_B}) \rho_{\rm QC} (K^\dag_i \otimes \openone_{d_B}),
\end{equation}
which has the same amount of the classical correlation as that of $\rho_{\rm QC}$, and at the same time, it has vanishing quantum discord from $A$ to $B$ because it is the CC state.

Now, consider any correlation that arises from $\rho_{\rm QC}$ given by 
\begin{align} \label{BLC_A,B}
p(a,b|x,y)=\tr \left\{\rho_{\rm QC} M^A_{a|x} \otimes M^B_{b|y}  \right\}, \forall a,x,b,y,
\end{align}
 for the POVM on Alice's and Bob's sides given by  $\{M^A_{a|x}\}_{a,x}$ and $\{M^B_{b|y}\}_{b,y}$, respectively. Any such correlation can be reproduced by the state $\rho_{\rm CC}$ in Eq.~(\ref{IC_A}) for the new measurement operators on Alice's side given by $\{\sum_i K^\dag_i M^A_{a|x} K_i\}_{a,x}$ and the same measurements on Bob's side. To see this, for such measurements, consider the correlation, $\{p'(a,b|x,y)\}_{a,x,b,y}$, arising from the CC state, $\rho_{\rm CC}$ in Eq. (\ref{IC_A}), given by
\begin{align}
p'(a,b|x,y)&=\tr \left\{\rho_{\rm CC} \sum_i K^\dag_i M^A_{a|x} K_i \otimes M^B_{b|y}    \right\}, 
\end{align}
for all $a,x,b,y$. Now, substituting $\rho_{\rm CC}$ in Eq. (\ref{IC_A}) in the above equation, we obtain  
\begin{multline}
p'(a,b|x,y)=\tr \Bigg\{  \sum_{i} (K_{i} \otimes \openone_{d_B}) \rho_{QC} (K^\dag_{i} \otimes \openone_{d_B})\\
\times \sum_{i'} K^\dag_{i'} M^A_{a|x} K_{i'} \otimes M^B_{b|y}   \Bigg\}. 
\end{multline}
 Now, using the invariance of the trace under cyclic permutations of the operators inside the trace, we have 
  \begin{multline}
 p'(a,b|x,y)\\= \tr \left\{\rho_{QC} \sum_{i,i'}K^\dag_{i'}  K_i  M^A_{a|x} K^\dag_i K_{i'}  \otimes M^B_{b|y}    \right\}.
 \end{multline}
 Note that $K_i$ as defined above satisfy the orthogonality condition $K^\dag_i K_{i'}=K_i K^\dag_{i'}=0$, for $i\ne i'$. Using this condition and  $\sum _iK^\dag_i K_{i}=\openone_{d_A}$  in the above equation, we obtain, 
 \begin{align}
p'(a,b|x,y)=&\tr \left\{\rho_{\rm QC} M^A_{a|x} \otimes M^B_{b|y}  \right\}, 
\end{align}
for all $a,x,b,y$, which is the same correlation as in Eq.~(\ref{BLC_A,B}). This implies that the correlation has an LHV-LHS model with $d_\lambda \le d_{\min}$, as it can be reproduced by the CC state. Similarly, this statement also holds in the case of the CQ states, which do not have a decomposition as in Eq.~(\ref{GLHd}).
\end{proof}

We have now obtained the following main result of this paper. 
\begin{thm}\label{thm}
For any given bipartite state in $\mathbb{C}^{d_A}\otimes\mathbb{C}^{d_B}$, global coherence is required to demonstrate SDI steering.
\end{thm}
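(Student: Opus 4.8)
The plan is to prove the contrapositive: I would show that any bipartite state that \emph{lacks} global coherence cannot demonstrate superlocality or superunsteerability, so that exhibiting semi-device-independent nonlocality or steerability necessarily requires global coherence. The first step is to reduce to separable states. Since every entangled state has global coherence (as noted after Definition~\ref{DefGC}), a state without global coherence must be separable; and a separable state is Bell-local and unsteerable, so for such a state semi-device-independent nonlocality and steerability coincide exactly with superlocality and superunsteerability, and it suffices to rule these out. Moreover, because a separable state has global coherence if and only if it is superseparable, the absence of global coherence forces $L_R \le d_{\min}$, so I only need to treat separable states that are not superseparable.

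Next I would classify all such states using Definition~\ref{DefGC} together with Observation~\ref{CQQCnSsep}. A separable, non-superseparable state falls into one of the following cases: (i) it admits a product decomposition of the form~(\ref{GLHd}) with $d_\lambda \le d_{\min}$, precisely the class whose discord can be created locally from a classically-correlated state; or (ii) it is discordant but has no such decomposition, in which case Observation~\ref{CQQCnSsep} guarantees it is a CQ state (when $d_A > d_B$) or a QC state (when $d_A < d_B$). The residual classically-correlated states with zero discord both ways are the trivial base case: being incoherent in the product basis, each directly yields an LHV (and LHV-LHS) model with $d_\lambda \le d_{\min}$ from its CQ/QC grouping into $d_{\min}$ terms.

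With this classification in hand, the theorem follows by invoking the two lemmas. For case~(i) I would apply Lemma~\ref{lem1}, which exhibits the decomposition~(\ref{LHVLR<=dmin}) with $d_\lambda \le d_{\min}$ and hence rules out superlocality and superunsteerability. For case~(ii) I would apply Lemma~\ref{lem2}, which rules out both phenomena for the discordant CQ/QC states. Since every case produces an LHV or LHV-LHS model respecting the bound $d_\lambda \le d_{\min}$, no state lacking global coherence can demonstrate semi-device-independent nonlocality or steerability, which is the desired necessity.

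The hard part is not this assembly but the content already isolated in Lemma~\ref{lem2}. The subtlety is that for a CQ or QC state the \emph{naive} hidden-variable decomposition, obtained by reading off the classical subsystem's basis, uses $d_\lambda = \max\{d_A, d_B\} > d_{\min}$ symbols; it therefore violates the dimension constraint and does \emph{not} on its own certify the absence of superlocality or superunsteerability. Establishing that a strictly smaller decomposition with $d_\lambda \le d_{\min}$ nevertheless exists is exactly what Lemma~\ref{lem2} secures, through the decohering-map reduction of the CQ/QC state to a classically-correlated state that reproduces every induced box. The only remaining point requiring care is verifying that the case analysis is exhaustive, in particular that both the asymmetric-dimension CQ/QC states and the zero-discord CC states are covered, so that Lemmas~\ref{lem1} and~\ref{lem2} together account for every state without global coherence.
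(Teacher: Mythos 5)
Your proposal is correct and takes essentially the same route as the paper: the published proof likewise disposes of entangled states via the implication entanglement $\Rightarrow$ global coherence, reduces the remaining case to separable states without superseparability, and invokes Lemma~\ref{lem1} and Lemma~\ref{lem2} to exclude superlocality and superunsteerability. Your contrapositive phrasing, the explicit CC-state base case, and the exhaustiveness check via Observation~\ref{CQQCnSsep} merely make explicit steps the paper leaves implicit.
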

\begin{proof}
To prove the statement of this theorem, we first note that all bipartite states that exhibit SDI steering can be classified into the following three categories. 
\begin{enumerate}
    \item All states that exhibit standard steering. Such states are necessarily entangled~\cite{WJD07,JWD07} and have SDI steerability since standard steering also implies SDI steering by Definition~\ref{SDISteerDef}. 
    \item There are entangled states that do have standard steerability, but exhibit SDI steering~\cite{DBD+18,JDS_PRA23}.
    \item There are separable states that exhibit SDI steering~\cite{DBD+18,JDS_PRA23}.
\end{enumerate}
We proceed to prove the theorem using this classification.
Because entanglement implies global coherence, in the case of the first two categories of states, global coherence is necessary for SDI steering.
 On the other hand, Lemma~\ref{lem1} and Lemma~\ref{lem2} imply that any separable state with $L_R \le d_{\min}$ cannot be used to demonstrate superunsteerability. Thus, global coherence is required to demonstrate SDI steering using separable states, since $L_R>d_{\min}$ implies global coherence by Definition \ref{DefGC} and superunsteerability witnesses SDI steering by Definition~\ref{SDISteerDef}. Since we have now arrived at the conclusion that global coherence is required for SDI steering in all of the three categories of states that exhibit SDI steering, we have proved the statement of the theorem.      
\end{proof}
%In Fig.~\ref{Fig:Hirarchy}, the classification of the bipartite states studied above is depicted.  

%\begin{figure}[t!]
%\begin{center}
%\includegraphics[width=7cm]{GH.pdf}
%\end{center}
%\caption{(a) and (c) Bipartite coherence is distributed globally for all entangled and all separable states that have superunsteerability. (b) Bipartite coherence is distributed locally to both subsystems for all states with a nonzero discord in both ways, admitting a local creation of discord \cite{GBH+10}. 
%\label{Fig:Hirarchy}}
%\end{figure}

\subsection{Remote state preparation (RSP)}  
We consider the quantum information protocol, 
%remote state preparation (RSP) 
RSP~\cite{BDS+01,Lo00,Pat00} using two-qubit systems. In this context,  in~\cite{DLM+12}, it was proposed that quantum discord captures the resourcefulness of two-qubit states for RSP. We review this result as follows.
%in Appendix \ref{RSP2q}.
If Alice and Bob share  a maximally entangled state of a two-qubit system, as a resource in the RSP protocol, then RSP can be achieved perfectly \cite{BBC+93}.  Instead, if they use a more general resource state, such as a mixed state of two-qubit systems, RSP could occur imperfectly. In this context, to quantify the performance of the RSP protocol, a figure of merit, called RSP-fidelity, was introduced in~\cite{DLM+12}. Let Alice use a shared two-qubit state  to remotely  prepare a state with the Bloch vector $\vec{s}$ in the plane orthogonal to the direction $\vec{\beta}$ on Bob's side.  Since Alice uses a mixed resource state, she remotely prepares a state with the Bloch vector $\vec{r}$ instead of  $\vec{s}$. To define the RSP-fidelity, the authors of~\cite{DLM+12} first defined the pay-off function as ${\cal P}:=(\vec{r} \cdot \vec{s})^2$. Next, for the given $\vec{s}$ and $\vec{\beta}$, considering the optimal payoff  ${\cal P}_{\max}$, which is the payoff optimized overall local measurements on Alice's side, the RSP-fidelity $\mathcal{F}$ is finally defined to be $\mathcal{F}(\rho_{AB}):=\min_{\vec{\beta}} {\cal P}^{\rm ave}_{\max}$, where ${\cal P}^{\rm ave}_{\max}$ is the optimal payoff averaged over all states $\vec{s}$ in the given plane of the Bloch sphere, and  ${\cal P}^{\rm ave}_{\max}$ minimized over all directions $\vec{\beta}$.

Any two-qubit state, up to local unitary transformations, can be expressed as~\cite{Luo08}:
\begin{multline}\label{can02q}
\zeta_{AB}=\frac{1}{4}\Bigg(\openone_2 \otimes \openone_2 + \vec{a} \cdot \vec{\sigma} \otimes \openone_2 
+ \openone_2 \otimes \vec{b} \cdot \vec{\sigma}  
+\sum^3_{i=1} c_i \sigma_i \otimes \sigma_i \Bigg),  
\end{multline}
where  $\openone_2$ is the $2 \times 2$ identity matrix, $\sigma_i$, with $i=1,2$ and $3$ are the three Pauli matrices, respectively,  $\vec{\sigma}$ is the vector of these Pauli matrices and $\{\vec{a},\vec{b},\vec{c}\} \in \mathbf{R}^3$ are vectors with norm less than or equal to unity satisfying $\vec{a}^2+\vec{b}^2+\vec{c}^2\le3$.
For two-qubit states, $\zeta_{AB}$, as given in Eq. (\ref{can02q}), with $c^2_1\ge c^2_2 \ge c^3_3$, the RSP-fidelity $
\mathcal{F}(\zeta_{AB})$ was evaluated to be of the form \cite{DLM+12},
\begin{align}
\mathcal{F}(\zeta_{AB})=\frac{1}{2}(c^2_2+c^2_3).
\end{align}
In~\cite{DLM+12}, a direct comparison for zero discord states, which have $c_2=c_3=0$, with the above RSP-fidelity suggested that quantum discord would be a necessary and sufficient resource for the RSP protocol. Interestingly, it was also shown that the above RSP-fidelity in the case of the Bell-diagonal states is equal to the geometric measure of quantum discord \cite{DVB10}.
However, in~\cite{Gio13}, it was shown that quantum discord is not the precise resource for the RSP protocol studied in~\cite{DLM+12}. 
 
 We proceed to obtain another main result of the paper in identifying the precise resource for the RSP protocol using separable two-qubit states. To this end, we consider a specific decomposition of two-qubit states. Before that, we note that as shown in~ \cite{LS98}, any two-qubit state can be written as a convex combination of a pure entangled state $\ket{\psi_e}$ and a separable state $\rho_{\texttt{sep}}$ as follows:
\be \label{LSe}
\rho_{AB}=p \ketbra{\psi_e}{\psi_e}+(1-p) \rho_{\texttt{sep}}.
\ee
In the above decomposition, the weight of the entangled state $p$ \textit{minimized} over all possible decompositions provides a \textit{measure of entanglement} \cite{LS98}.

On the other hand, for our purpose, we consider a decomposition as in Eq.~(\ref{LSe}) in which the weight of the entangled state $p$ \textit{was maximized} over all possible decompositions such that it provides \textit{a measure of global coherence}. For the two-qubit Werner states as given by Eq.~(\ref{Wers}),
the weight $p$ in the decomposition gives the maximal entangled part to quantify global coherence. This maximal weight of an entangled state as a measure of global coherence is analogous to the quantifier \textit{Schr\"{o}dinger strength}~\cite{JD23}, as reviewed in App. \ref{AppSS}, which provides
a measure of SDI steerability.

Consider two-qubit Bell diagonal states $\tau_{AB}$ that are the states in Eq.~(\ref{can02q}) with local Bloch vectors $\vec{a}$ and $\vec{b}$ being zero vectors, i.e., all two-qubit states with maximally mixed marginals.
For two-qubit Bell-diagonal states $\tau_{AB}$ which include the Werner state given by Eq.~(\ref{Wers}), the Schr\"{o}dinger strength in the two-setting scenario $SS_{2}$ has been known~\cite{JD23} and is given by Eq.~(\ref{SSBD}) in Appendix~A. To illustrate the idea of this quantifier $SS_{2}$, 
the decomposition of the correlation that implies this Schr\"{o}dinger strength of the Werner state, i.e.,  $SS_{2}(\rho_W)=p$, is given by 
 \begin{equation}
P(a,b|x,y)=p P_S(a,b|x,y) + (1-p) P_{US}(a,b|x,y),
\end{equation}
where $P_S(a,b|x,y)$ is an extremal steering correlation arising from $\ket{\Phi^+}$ that violates the two-setting steering inequality (\ref{SI2}) maximally and $P_{US}(a,b|x,y)$ is an unsteerable correlation arising from the product state in Eq.~(\ref{Wers}), i.e., $\frac{\openone_2 \otimes \openone_2}{4}$. Note that in the decomposition of the Werner state given by Eq.~(\ref{Wers}), the separable state is the product state and the weight of the entangled state also gives the  \textit{Schr\"{o}dinger strength} of the state in the two-setting scenario. 

Next, for the two-qubit states beyond the Werner states, we identify the separable state for the \textit{optimal decomposition} that maximizes the entangled part in Eq.~(\ref{LSe}) to quantify global coherence. To this end,  we define a quantity called \textit{Mermin strength}, $\Gamma$, in the context of the scenario as in Fig. \ref{Fig:superunsteer}, where $a,x,b,y \in \{0,1\}$. This quantity is defined in terms of the covariance of the Mermin functions \footnote{Such a covariance Mermin function is inspired by covariance Bell inequalities studied in~\cite{PHC+17}.} that was used to define a quantity called \textit{Mermin discord} in~\cite{Jeb14}.  
Consider the covariance,  $\texttt{cov}(A_x,B_y)$, of the observables on Alice and Bob's side $A_x$ and $B_y$ defined as $A_x:=M^A_{0|x}-M^A_{1|x}$ and $B_y:=M^B_{0|y}-M^B_{1|y}$, respectively, given by
 \be
 \texttt{cov}(A_x,B_y)=\braket{A_xB_y} -\braket{A_x}\braket{B_y},
 \ee
 where $\braket{A_xB_y}$ and $\braket{A_x}$, and $\braket{B_y}$
 are joint and marginal expectation values, respectively.

To define $\Gamma$, the absolute covariance Mermin functions are defined as follows:
\begin{align}
\begin{split}
\mathcal{M}_{0} &=| \texttt{cov}(A_0,B_0)+\texttt{cov}(A_1,B_1) |  \\ 
\mathcal{M}_{1} &=| \texttt{cov}(A_0,B_0)-\texttt{cov}(A_1,B_1) |  \\
\mathcal{M}_{2} &=|\texttt{cov}(A_0,B_1)+\texttt{cov}(A_1,B_0) |  \\
\mathcal{M}_{3} &=|\texttt{cov}(A_0,B_1)-\texttt{cov}(A_1,B_0) |.
\end{split}
\end{align}
 Next, we  construct the following triad of quantities using the above covariance Mermin functions:
\begin{align}
\begin{split}
\Gamma_1&:=\tau\left(\mathcal{M}_0,\mathcal{M}_1,\mathcal{M}_2,\mathcal{M}_3\right)\\
\Gamma_2&:=\tau\left(\mathcal{M}_0,\mathcal{M}_2,\mathcal{M}_1,\mathcal{M}_3\right) \\
\Gamma_3&:=\tau\left(\mathcal{M}_0,\mathcal{M}_3,\mathcal{M}_1,\mathcal{M}_2\right),\label{gi}
\end{split}
\end{align}
where    
\begin{equation}
\tau\left(\mathcal{M}_0,   \mathcal{M}_1,    \mathcal{M}_2,
\mathcal{M}_3\right)  \equiv \Big||\mathcal{M}_0  - \mathcal{M}_1  | -
|\mathcal{M}_2  - \mathcal{M}_3|\Big|,
\end{equation}  
and so on.  
Finally,  \textit{Mermin strength} $\Gamma$ is defined as follows:
\begin{equation}
\Gamma := \min_i \Gamma_i, 
\label{eq:G}
\end{equation}
which takes values as $\Gamma \ge 0$.
This quantity is considered here for the reason that, in addition to witnessing superunsteerability as shown in App.~\ref{certss}, it also provides a necessary and sufficient certification of the presence of \textit{Schr\"{o}dinger strength} in any given correlation, as indicated by a nonzero $SS_{2}$ \footnote{In~\cite{Jeb14}, Mermin discord, which is the same quantity as $\Gamma$, but without invoking the covariance on the Mermin functions, was used to capture the maximal fraction of the Mermin correlation (a maximally steerable correlation~\cite{DDJ+18}), in unsteerable correlations. In the present work, we adopted the quantity of Mermin strength $\Gamma$ as in Eq.~(\ref{eq:G}) to have it vanish for all CQ and QC states, while the quantity of Mermin discord in~\cite{Jeb14} does not always vanish for these states.}. This follows because $\Gamma > 0$ for any given correlation if and only if the correlation admits a decomposition with a nonzero weight of a maximally steerable correlation~\cite{Jeb14,DDJ+18}, implying a nonzero $SS_{2}$~\cite{JDK+19}.
For example, for the Bell diagonal states $\tau_{AB}$,  $\Gamma(\tau_{AB})=2|c_2|$ for the measurements that give rise to $SS_{2}(\tau_{AB})=|c_2|$ in Eq.~(\ref{SSBD}) in Appendix~A. On the other hand, superunsteerability of the given discordant state does not necessarily imply 
%the presence of the 
a nonvanishing Schr\"{o}dinger strength, see Observation $8$ in Appendix~\ref{certss} for an illustration of this.

We now proceed to define the \textit{optimal decomposition} that gives the \textit{maximal weight of the entangled part} in Eq.~(\ref{LSe}) as follows:
\be \label{LSd}
\rho_{AB}=p \ketbra{\psi_e}{\psi_e}+(1-p) \sigma^{\Gamma=0}_{\texttt{sep}},
\ee
where the separable state, $\sigma^{\Gamma=0}_{\texttt{sep}}$, has Mermin strength,  $\Gamma$ in Eq.~(\ref{eq:G}), vanishing for all measurements.
A separable state or an unsteerable entangled state that has such a decomposition with a \textit{maximal weight} of an entangled state can be used to imply a nonvanishing \textit{Schr\"{o}dinger strength} $SS_{2}$. This follows from the observation that the entangled part in the decomposition can produce an extremal steerable correlation~\cite{DDJ+18}, and this part has the maximal weight. This, in turn, implies that for superunsteerable states that admit the decomposition as in Eq.~(\ref{LSd}),   the weight of the entangled state in Eq.~(\ref{LSd}) is maximized over all possible decompositions to quantify global coherence.

 We can now state the following lemma.
\begin{lem}\label{SSLSd}
    If any superunsteerable state has a nonzero  $\Gamma$ in Eq.~(\ref{eq:G}), it admits a decomposition as in Eq. (\ref{LSd}).
\end{lem}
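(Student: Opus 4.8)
The plan is to produce the decomposition~(\ref{LSd}) by maximizing the entangled weight in the Lewenstein--Sanpera form~(\ref{LSe}) and then reading off the vanishing of the Mermin strength on the leftover separable state. Since~\cite{LS98} guarantees that \emph{every} two-qubit state can be written as in~(\ref{LSe}) with $\ket{\psi_e}$ pure entangled and $\rho_{\texttt{sep}}$ separable, separability of the non-entangled part is automatic; the entire content of~(\ref{LSd}) is the extra requirement that the separable part carry \emph{no} Mermin strength. The Werner state~(\ref{Wers}), whose decomposition realizes the maximal entangled fraction $p=SS_{2}$ even in its separable range $p\le 1/3$, is the guiding example for this construction.

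The steps I would carry out are as follows. (i) Use that $\Gamma$ in~(\ref{eq:G}) equals, up to the normalization that makes it vanish on all CQ and QC states, the maximal fraction of the Mermin box in the box generated by $\rho_{AB}$ under the optimal two-setting measurements, so that $\Gamma(\rho_{AB})>0$ produces a box decomposition with a strictly positive Mermin-box component. (ii) Invoke that the Mermin box is a maximally steerable box, generated by a maximally entangled two-qubit state; this both fixes the pure entangled component $\ket{\psi_e}$ and, together with the forward statement established just before the lemma (the entangled part of a maximal-fraction decomposition produces the extremal steerable box), forces the optimal entangled weight in~(\ref{LSe}) to be strictly positive. (iii) Among all decompositions~(\ref{LSe}) of $\rho_{AB}$, select one that \emph{maximizes} $p$; the maximizer $p^\star$ is attained because the constraints ``$\rho_{AB}-p\ketbra{\psi_e}{\psi_e}\ge 0$ and remainder separable'' define a closed, bounded range in $p$. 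Write the remainder as $\sigma_{\texttt{sep}}=(\rho_{AB}-p^\star\ketbra{\psi_e}{\psi_e})/(1-p^\star)$.

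The decisive and hardest step is (iv): showing $\Gamma(\sigma_{\texttt{sep}})=0$, which identifies $\sigma_{\texttt{sep}}$ with $\sigma^{\Gamma=0}_{\texttt{sep}}$ and yields~(\ref{LSd}). I would argue by contradiction: if $\Gamma(\sigma_{\texttt{sep}})>0$, then steps (i)--(ii) applied to $\sigma_{\texttt{sep}}$ would extract a further strictly positive maximally entangled weight, and recombining it with the first extraction would give a decomposition of $\rho_{AB}$ of the form~(\ref{LSe}) whose entangled fraction exceeds $p^\star$, contradicting maximality. The main obstacle is to make this re-extraction rigorous in the presence of the \emph{nonconvexity} of $SS_{2}$, and hence of $\Gamma$, recorded in the superadditivity property stated after~(\ref{SchS}): one cannot simply add Schr\"{o}dinger strengths across a mixture. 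I would sidestep this by phrasing the whole argument in terms of the entangled fraction $p$ itself---a quantity that is stable under the Lewenstein--Sanpera construction---and by using the Mermin-box/maximally-entangled-state correspondence of step~(ii) to certify that any residual $\Gamma>0$ always exposes an extractable pure-entangled part. The example~(\ref{ssepQ=0}), a superunsteerable state with $\Gamma=0$ admitting no such decomposition, confirms that the hypothesis $\Gamma>0$ is essential and pinpoints step~(iv) as exactly where it is used.
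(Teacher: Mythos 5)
Your steps (i)--(iii) essentially reconstruct what the paper does, but your step (iv) --- which you correctly single out as the decisive one --- contains a genuine gap, and it is a step the paper never attempts. The contradiction you propose does not close: if $\Gamma(\sigma_{\texttt{sep}})>0$ lets you extract a pure entangled component $\ket{\psi'_e}$ with weight $q>0$ from the remainder, recombination yields $\rho_{AB}=p^\star\ketbra{\psi_e}{\psi_e}+(1-p^\star)q\ketbra{\psi'_e}{\psi'_e}+(1-p^\star)(1-q)\sigma'$, which is of the form~(\ref{LSe}) only if $\ket{\psi'_e}=\ket{\psi_e}$: that form demands a \emph{single} pure entangled state, and the two-term mixture $\lambda\ketbra{\psi_e}{\psi_e}+(1-\lambda)\ketbra{\psi'_e}{\psi'_e}$ of distinct pure entangled states is in general mixed and can even be separable. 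Hence no decomposition with entangled fraction exceeding $p^\star$ is exhibited, and the maximality of $p^\star$ is not contradicted. Your announced sidestep --- ``phrasing the whole argument in terms of the entangled fraction $p$ itself'' --- does not help, because $p$ is defined relative to the single-pure-state form; this is exactly where the nonconvexity you flag bites, and your proposal does not overcome it. Note also that the paper itself allows the remainder of a maximal-weight decomposition to retain quantumness (in Eq.~(\ref{steerstrength}) the unsteerable part ``may be superunsteerable''), which is direct evidence against the expectation underlying your step (iv) that maximality alone forces the remainder's $\Gamma$ to vanish.

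For comparison, the paper's proof of Lemma~\ref{SSLSd} is a two-line implication that stops where your step (iii) ends: a nonzero $\Gamma$ certifies a nonvanishing Schr\"odinger strength $SS_2$ (via the correspondence $\Gamma(\tau_{AB})=2|c_2|$ against $SS_2(\tau_{AB})=|c_2|$ for Bell-diagonal states and the Mermin-box interpretation), and a nonvanishing $SS_2$ requires a nonvanishing fraction of the entangled part in~(\ref{LSd}); the property $\Gamma(\sigma^{\Gamma=0}_{\texttt{sep}})=0$ is \emph{built into the definition} of the optimal decomposition~(\ref{LSd}) rather than derived, so the paper never confronts your step (iv). (One may object that this leaves the existence of a remainder with exactly vanishing $\Gamma$ unargued, but that is the paper's choice of formulation, not something your re-extraction argument repairs.) A secondary over-commitment sits in your step (ii): Eq.~(\ref{LSd}) requires $\ket{\psi_e}$ only to be pure and entangled, not maximally entangled, so fixing it via the Mermin-box/maximally-entangled-state correspondence needlessly narrows the admissible decompositions and is neither needed for, nor used by, the paper's argument.
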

\begin{proof}
A nonzero  $\Gamma$ in Eq.~ (\ref{eq:G}) implies a nonvanishing \textit{Schr\"{o}dinger strength} $SS_{2}$
which requires a nonvanishing weight of the entangled part in Eq.~(\ref{LSd}).
\end{proof}
Before stating the main result, we present a lemma on \textit{genuine} RSP using mixed states. Note that the payoff function ${\cal  P}$ in the RSP-fidelity is nonzero if and only if Alice is able to effectively prepare a state $\vec{r}$, which is not a completely mixed state. However, this does not necessarily imply a genuine RSP. By genuine RSP, we refer to a nonzero RSP-fidelity that implies that the remotely prepared state $\rho_{\vec{r}}$, with the Bloch vector $\vec{r}$, can be decomposed as 
\begin{equation}\label{gRSP}
\rho_{\vec{r}}=s\rho_{\vec{s}}+(1-s) \rho, 
\end{equation}
with $0< s \le 1$, where $\rho_{\vec{s}}$ is the target state to be prepared and $\rho$ is another state, for any target state. If Alice uses the maximally entangled state, then $s=1$ in the above decomposition can be achieved for any target state. On the other hand, if Alice uses a mixed resource state, then a nonzero RSP-fidelity may not imply that $s>0$ in Eq.~(\ref{gRSP}) is achieved for any target state.
\begin{lem}\label{GRSP}
For any two-qubit state, a nonzero $p$ in Eq.~(\ref{LSd}) is a necessary and sufficient condition to implement a genuine RSP.
\end{lem}
\begin{proof}
For any superunsteerable state that has a \textit{maximal weight of an entangled state} in Eq.~(\ref{LSd}),   the entangled part of the decomposition acts as a \textit{nonlocal resource} to implement a genuine RSP. This follows from the fact that the presence of such a nonlocal resource in the given two-qubit unsteerable state can be used to remotely prepare any target qubit state with a certain fraction, i.e., the remotely prepared state admits a decomposition as in Eq.~(\ref{gRSP}). On the other hand, any steerable state is entangled; hence, it has a nonzero weight of an entangled state to implement a genuine RSP.
\end{proof}

 We have now obtained our second main result.
\begin{thm}
      Global coherence of two-qubit states $\zeta_{AB}$, as indicated by a nonzero Mermin strength $\Gamma(\zeta_{AB})$,  is a necessary and sufficient quantum resource for implementing a genuine RSP using a two-qubit state. 
\end{thm}
\begin{proof}
From Lemma~\ref{SSLSd} and Lemma~\ref{GRSP}, it follows that if any given two-qubit separable state has a nonzero  $\Gamma$ in Eq.~(\ref{eq:G}), then it supports a nonzero RSP-fidelity that implies a genuine RSP. On the other hand, if the two-qubit state has a vanishing  $\Gamma$ in Eq.~(\ref{eq:G}), then it always implies that no genuine RSP takes place.
\end{proof}
The above result can also be seen from the perspective of Giorgi in~\cite{Gio13}, who presented that quantum discord is neither a necessary nor a sufficient resource.  This is due to the following observation. 
\begin{observation} Giorgi~\cite{Gio13}:
 There are states such as the state in Eq.~(\ref{LC_{A,B}}) with a nonzero discord that supports a nonzero RSP-fidelity. Such states have a local creation of discord. Some states with vanishing RSP-fidelity have a nonzero discord. For such states, $c_2=c_3=0$ in Eq.~(\ref{can02q}).  \label{Gio}
\end{observation}
 Because superunsteerability occurs for those discordant states that cannot be created using local operations of the form $\Phi_A \otimes \Phi_B$, it constitutes a sufficient resource to implement a genuinely RSP.
 On the other hand, there are discordant states that are not useful to implement a genuinely RSP. Such states have vanishing $SS_{2}(\zeta_{AB})$ as well as $\Gamma(\zeta_{AB})$ in Eq.~(\ref{eq:G}). Note that the superunsteerable state in  Eq.~(\ref{ssepQ=0}) does not support a nonzero RSP-fidelity, as Giorgi pointed out in~\cite{Gio13}. Such states have $SS_{2}=0$ and $\Gamma=0$, implying that they are not useful for the RSP protocol. Thus, a nonzero $\Gamma$ of the given discordant state is a necessary resource for the RSP protocol.
 
In Fig.~\ref{Fig:Hirarchy1}, we depict the hierarchy of correlations in bipartite states that follows from the present work.  In this hierarchy, the regions $I$-$VI$ represent the separable correlations while the region $VII$ represents the nonseparable correlations, the region $II$ represents the one-way discordant correlations while the regions $III$-$VII$ represent the two-way discordant correlations, the regions $II$-$III$ represent the discordant correlations that have local coherence only while the regions $IV$-$VII$ represent the discordant correlations that have global coherence and the regions $V$ and $VI$ represent the discordant correlations in separable states that exhibit SDI steering, and, finally, the regions $II$-$V$ represent the two-qubit discordant correlations useless for the RSP and the regions $VI$ and $VII$ represent the two-qubit discordant correlations useful for the RSP.    

 \begin{figure}[t!]
\begin{center}
\includegraphics[width=7.5cm]{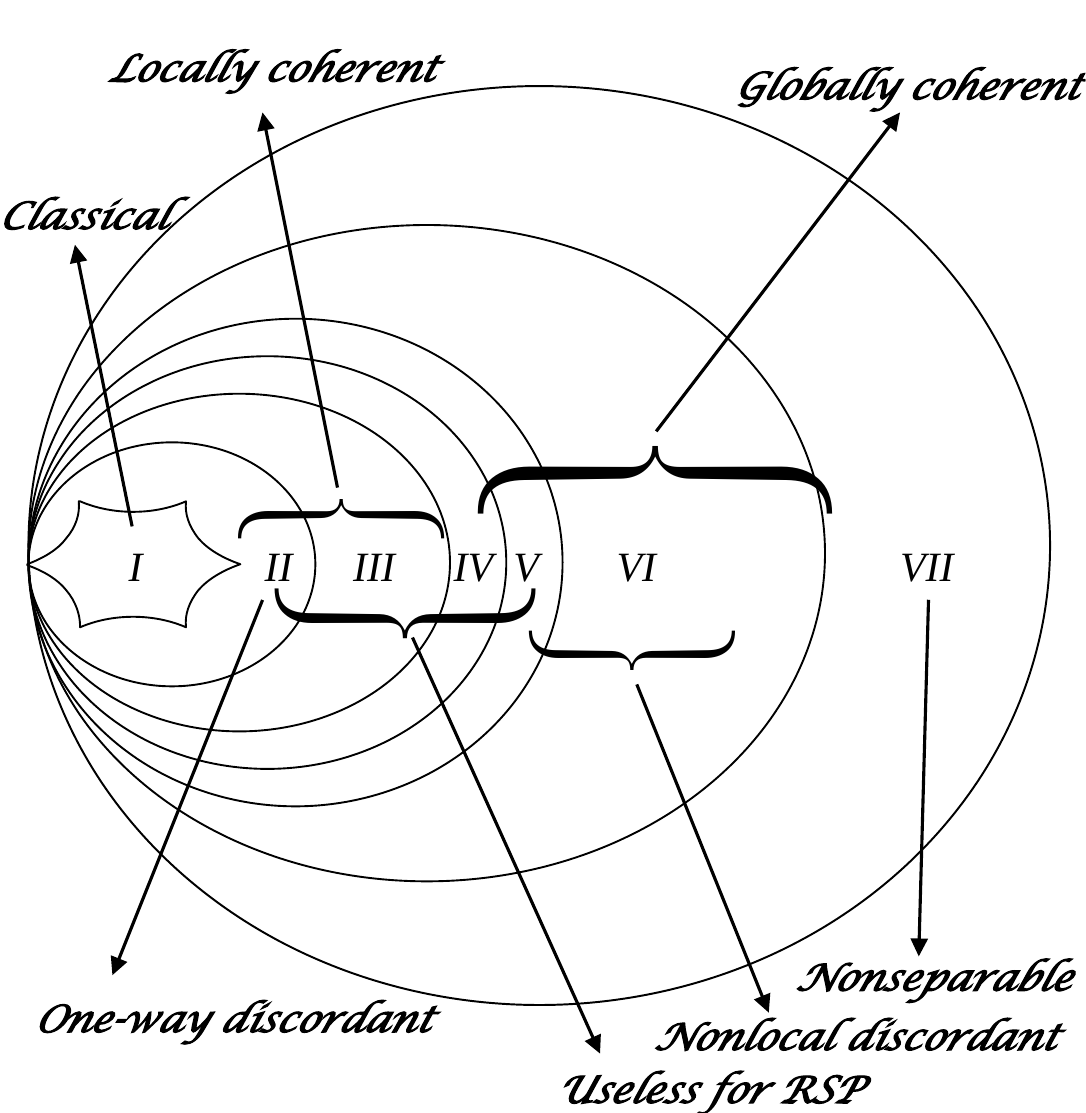}
\end{center}
\caption{Hierarchy of correlations in bipartite quantum states. The region $I$ represents correlations in incoherent states, and the regions $II$ and $III$ represent correlations in locally coherent states that include all one-way discordant and two-way discordant states that can be created locally; the regions $IV$-$VII$ represent correlations in globally coherent states.
 We have found that the regions $V$ and $VI$ (denoted as nonlocal discordant), which are subsets of all two-way discordant correlations that have global coherence, exhibit superunsteerability, implying SDI steerability. Nonlocal discordant correlations in region $V$ with $\Gamma = 0$ are not useful for the RSP.
\label{Fig:Hirarchy1}}
\end{figure}

We now proceed to remark on the quantification of the resource state for the efficiency of the RSP protocol.
 To certify a genuine RSP using a two-qubit separable state,  it suffices to certify that the state is superunsteerable in the two-setting scenario with nonvanishing $\Gamma(\zeta_{AB})$ in Eq.~(\ref{eq:G}). We now make the following observation. From the Schr\"{o}dinger strengths of two-qubit Bell-diagonal states $\tau_{AB}$ as given in Eq.~(\ref{SSBD}), for any Bell-diagonal state, $\tau'_{AB}$, with the given  $SS_{2}(\tau'_{AB})>0$ and $SS_{3}(\tau'_{AB})=0$, the efficiency of the state for the RSP protocol is always lower than that of any other  Bell-diagonal state, $\tau''_{AB}$, with the same nonzero $SS_{2}$ and any nonzero $SS_{3}$. This follows from the fact that the RSP-fidelity of the former state is given by $\mathcal{F}(\tau'_{AB})=\frac{c^2_2}{2}$, while the latter state has $\mathcal{F}(\tau''_{AB})=\frac{c^2_2+c^2_3}{2}$. In~\cite{DLM+12}, in showing the superiority of the superunsteerable state over the entangled state for RSP, quantification of the resource states using quantum discord was used.
 \begin{rem}
  While global coherence, as witnessed by nonvanishing $\Gamma$ in Eq.~(\ref{eq:G}),
  certifies the genuinely quantum resource for the RSP protocol in~\cite{DLM+12}, quantum discord and the Schr\"{o}dinger strengths of the resource states can be used to provide quantification/comparison of the efficiencies of the protocol using different resource states. 
  \end{rem}

\section{Conclusions and Discussions}\label{concDisc} 
In this work, we have been motivated to study \textit{which aspect of bipartite coherence} implies \textit{genuinely quantum correlation} in discord. To this end, we have defined \textit{global coherence} of bipartite states, which is a specific form of bipartite coherence that is not local coherence in any of the subsystems or both subsystems alone. Global coherence of separable states is equivalent to the witness of discord by the rank of the correlation matrix of the density matrix introduced in~\cite{DVB10}.   In case of \textit{global coherence as a resource},   \textit{any local operations} of the form $\Phi_A \otimes \Phi_A$, which may create coherence locally, are \textit{free operations}. To rigorously formulate global coherence, we pointed out a strong link between resource frameworks of global coherence in bipartite and single-partite states, just as the strong link between entanglement and coherence shown in~\cite{SSD+15}.

We have then addressed the question of whether \textit{genuinely quantum correlation} in discord, as indicated by \textit{global coherence}, is required for  \textit{operational tasks}. To this end, we have considered the task of \textit{SDI steering} for which shared randomness is not freely available. We have then shown that any discordant state, which does not have global coherence, cannot be used to demonstrate \textit{superunsteerability}. Thus, we have established that global coherence is necessary for SDI steering tasks. 

The result obtained in Theorem~\ref{thm} also holds for superlocality~\cite{DW15,JAS17} in the context of SDI Bell scenarios.
We now address the improvements achieved in this result of the present investigation over the previous research in~\cite{BPP15,JAS17, JD23,JDS_PRA23}. 
In~\cite{BPP15}, \textit{genuinely quantum correlation} in separable states was already proposed to be in discord with $L_R > d_{\min}$, as they cannot be created via local operations. On the other hand, we have identified that such genuinely quantum correlation is a specific form of bipartite coherence that is not local to any of the subsystems or both of the subsystems and acts as a necessary resource for superunsteerability or superlocality. 
In~\cite{JDS_PRA23}, the question of whether the one-wayness property of quantum discord can be used to demonstrate the asymmetry of superunsteerability was explored. Using Theorem~\ref{thm}, it follows that all one-way discordant states cannot be used to demonstrate superunsteerability. Thus,  the occurrence of superunsteerability asymmetrically remains to be studied.
On the other hand, in~\cite{JAS17, JD23},  two-way discord as a necessary resource for superlocality was studied, and it remained unknown whether every two-way discordant state can be used to demonstrate superlocality.  From Theorem~\ref{thm}, it follows that a two-way discord is not sufficient for superlocality. In addition to this,  we have also identified the necessary condition on two-way discordant states to demonstrate superlocality.

Finally, to answer the aforementioned question of whether \textit{global coherence} is required for  \textit{operational tasks} in the context of quantum information processing using separable states, we have considered the quantum communication task of \textit{RSP} using two-qubit correlations~\cite{DLM+12,Gio13}. We have then identified the necessary and sufficient quantum resource for this task by the witness of global coherence called ``Mermin strength''.

Taking quantum discord as a resource for quantum information processing based on previous proposals, such as in~\cite{DLM+12}, may be misleading due to the local creation of the resource. Although in~\cite{DLM+12} quantum discord was proposed as a resource for RSP,  the experimental results of~\cite{DLM+12} are impressive and are not misleading from the perspective of the present work: the states chosen in the experiments are a separable state having superunsteerability and an entangled state, and indeed it is very surprising that the separable state outperforms the entangled state, as shown in the experiments of~\cite{DLM+12}.  
 
 In contrast to the experimental demonstration in~\cite{DLM+12},  with a general framework for the RSP in~\cite{HTM14}, it was shown that separability can never outperform entanglement in the quantum communication task. In this framework, a different version of RSP-fidelity based on linear fidelity and general decoding operations on Bob's side was studied. In showing the superiority of entanglement over separability in the RSP, an assumption of one-way local operation and classical communication native to entanglement theory is utilized. This assumption and the RSP-fidelity based on linear fidelity \textit{convexifies} the quantum resource for the RSP. This convexification of the quantum resource implies that all separable states are free resources for the RSP protocol. Even with this demonstration of~\cite{HTM14}, the experimental results of~\cite{DLM+12} are still surprising, as these two RSP protocols using the two-qubit states are in two different resource-theoretic frameworks: the latter is a nonconvex resource theory, whereas the former is a convex resource theory.  For the practical implementation of RSP, the RSP protocol in~\cite{DLM+12} seems more relevant than the protocol in~\cite{HTM14}, which shows the superiority of entanglement over separability.

\section*{Acknowledgements}
This work was supported by the National Science and Technology Council (NSTC), the Ministry of Education (MOE) through the Higher Education Sprout Project NTU-113L104022-1, and the National Center for Theoretical Sciences (NCTS) of Taiwan. H.- Y.~K. is supported by NSTC, under Grant
No.~NSTC 112-2112-M-003- 020-MY3, and MOE through the Higher Education Sprout Project of National Taiwan Normal University (NTNU). 
H.-C.~Cheng is supported by the Young Scholar Fellowship (Einstein Program) of NSTC under Grants No.~NSTC 112-2636-E-002-009, No.~NSTC 113-2119-M-007-006, No.~NSTC 113-2119-M-001-006, No.~NSTC 113-2124-M-002-003, and No.~NSTC 113-2628-E-002-029 and by the Yushan Young Scholar Program of MOE under Grants No.~NTU-113V1904-5. 
%and by the research project ``Pioneering Research in Forefront Quantum Computing, Learning and Engineering'' of National Taiwan University under Grant No. NTU-CC- 112L893405 and NTU-CC-113L891605. H.-C.~Cheng acknowledges the support from the “Center for Advanced Computing and Imaging in Biomedicine (NTU-113L900702 and NTU-114L900702)” through The Featured Areas Research Center Program within the framework of the Higher Education Sprout Project by the Ministry of Education (MOE) in Taiwan. 
H.-S.G. acknowledges support from NSTC under Grants No. NSTC 113-2112-M-002-022-MY3, No. NSTC 113-2119-M-002 -021, and No. NSTC 114-2119-M-002-017-MY3, from the US Air Force Office of Scientific Research under Award Number FA2386-23-1-4052 and the support of Taiwan Semiconductor Research Institute (TSRI) through the Joint Developed Project (JDP), and the support from the Physics Division, NCTS. 
H.-S.G. and H.-C.~Cheng acknowledge support from the research project ``Pioneering Research in Forefront Quantum Computing, Learning and Engineering'' of National Taiwan University (NTU) under Grant No. NTU-CC- 113L891604  No.~NTU-CC-114L8950, and No.~NTU-CC-114L895004, as well as Grants No.~NTU-CC-113L891605 
and No.~NTU-CC-114L895005, respectively.
H.-S.G. and H.-C.~Cheng also acknowledge the support from the “Center for Advanced Computing and Imaging in Biomedicine (NTU-113L900702 and NTU-114L900702)” through the Featured Areas Research Center Program within the framework of the MOE Higher Education Sprout Project.

\appendix

\section{Schr\"{o}dinger strength}\label{AppSS}
Similarly to the quantification of standard steerability based on the steering cost~\cite{DDJ+18}, a quantifier termed \textit{Schr\"{o}dinger strength} was introduced in~\cite{JDK+19}, which can be used to quantify SDI steering~\cite{JDS24}. To introduce these two quantifiers,
consider $n$-setting steering scenario to observe a box $P(a,b|x,y)$, where $x$ $\in$ $\{0, 1, 2, ..., n-1\}$ and $y$ $\in$ $\{0, 1, 2, ..., n-1\}$. In this scenario, any correlation $P(a,b|x,y)$ can be decomposed into a convex mixture of a steerable part and an unsteerable part,
\be \label{steerstrength}
 P(a,b|x,y)=p P_{S}(a,b|x,y)+(1-p) P_{US}(a,b|x,y),    
\ee
where $P_{S}(a,b|x,y)$ is  an extremal steerable correlation and $P_{US}(a,b|x,y)$  is an unsteerable correlation which may be superunsteerable; $0 \leq p \leq 1$.  In the given scenario, an extremal steerable correlation cannot be expressed as a convex mixture of other correlations. Thus, it maximally violates the steering inequality. 
The weight of the extremal steerable correlation, $P_{S}(a,b|x,y)$, minimized over all possible decompositions of the form (\ref{steerstrength}) is called the steering cost of the correlation $P(a,b|x,y)$ providing a measure of quantum steering \cite{DDJ+18}.

On the other hand, the weight of the extremal steerable correlation, $P_{S}(a,b|x,y)$, in Eq.~(\ref{steerstrength}) maximized over all possible decompositions is called the \textit{Schr\"{o}dinger strength}  (SS) of the correlation $P(a,b|x,y)$:
\be \label{SchS}
SS_{n} \Big( P(a,b|x,y) \Big):=\max_{\text{decompositions}}p.
\ee 
Schr\"{o}dinger strength as defined above has the following properties:
\begin{itemize}
\item $0 \leq SS_{n} \leq 1$ (since $p$ is a valid probability, $0 \leq p \leq 1$).
\item $SS_{n}=0$ for all correlations that have a decomposition as in Eq. (\ref{LHV-LHS}) with $d_\lambda \le d_A$.
\item $SS_{n}>0$ implies that the correlation does not have a decomposition as in Eq.~ (\ref{LHV-LHS}) with $d_\lambda \le d_A$.  
\item $SS_{n}$ has nonconvexity, i.e., there exist correlations  
$P_1(a,b|x,y)$ and $P_2(a,b|x,y)$ such that 
\begin{multline}
SS_{n}\left(p P_1(a,b|x,y) + (1-p) P_2(a,b|x,y)\right)  \\
\ge p SS_{n}\left( P_1(a,b|x,y) ) + (1-p) SS_{n}(P_2(a,b|x,y)\right),
\end{multline}
for $0 \le p \le 1$.
\end{itemize}

In~\cite{JDK+19}, the Schr\"{o}dinger strengths of two-qubit Bell-diagonal states have been studied in two- and three-setting scenarios. 
The two-qubit Bell diagonal states $\tau_{AB}$ are the states in Eq.~(\ref{can02q}) with local Bloch vectors $\vec{a}$ and $\vec{b}$ being zero vectors, i.e., all two-qubit states with maximally mixed marginals.
 The Schr\"{o}dinger strengths of two-qubit Bell-diagonal states with $c^2_1\ge c^2_2 \ge c^2_3$  in the two- and three-setting scenarios are, respectively, evaluated as
 \begin{align} \label{SSBD}
 SS_{2}(\tau_{AB})&=|c_2| \quad \& \quad  SS_{3}(\tau_{AB})=|c_3|.
 \end{align}
 In~\cite{JDS24}, for the quantification of the SDI steerability of Bell-diagonal states by the Schr\"{o}dinger strengths,  a geometric characterization has been given by using the quantum steering ellipsoid~\cite{JPJR14}. See Fig.~\ref{Fig:SSWer} for the variation of $SS_{2}(\rho_W)=p$ in the space of all two-qubit states. The fraction of the entangled state $p$ in Eq. (\ref{Wers}) also provides the quantification of global coherence.

 \begin{figure}[t!]
\begin{center}
\includegraphics[width=7cm]{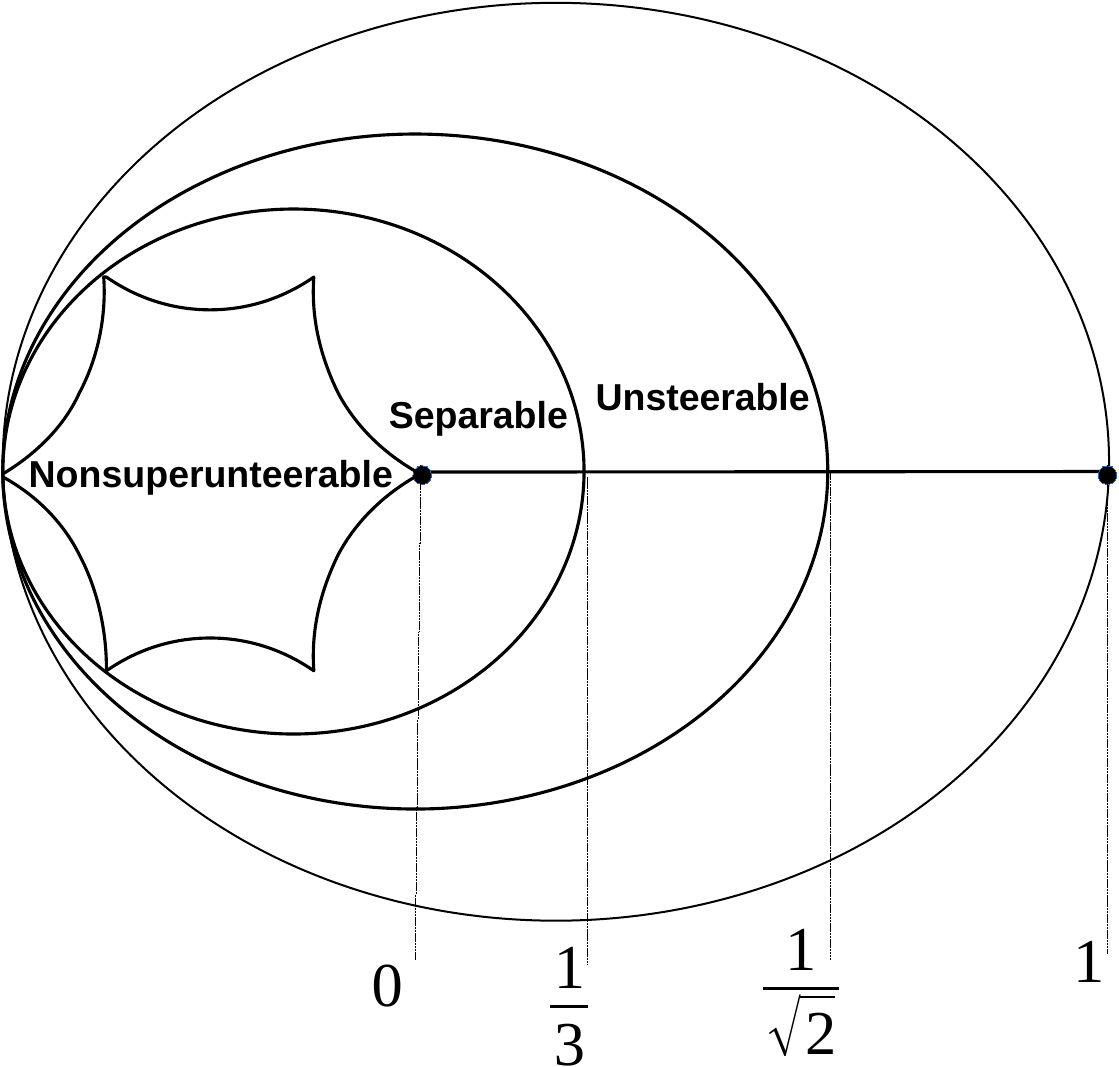}
\end{center}
\caption{An abstract representation of the convex set of correlations in separable and unsteerable states in the two-setting scenario for two-qubit systems and the nonconvex set of correlations in the nonsuperunsteerable states. The straight line represents the Werner state given by Eq. (\ref{Wers}), which has the Schr\"{o}dinger strength,  $SS_{2}(\rho_W)=p>0$ for any $p>0$ \cite{JDK+19}. In the two-setting scenario, the Werner state has SDI steerability for any $p>0$, on the other hand, it has standard steerability for $p>1/\sqrt{2}$.   
\label{Fig:SSWer}}
\end{figure}

%However, in Ref.~\cite{Gio13}, it was identified that for the RSP protocol in the presence of mixed two-qubit states, as studied in Ref.~\cite{DLM+12}, quantum discord is neither a necessary nor a sufficient resource.  This is due to the following observation. 
%\begin{observation} Giorgi \cite{Gio13}:
%There are states such as the state in Eq.~(\ref{LC_{A,B}}) with a nonzero discord that supports a nonzero RSP fidelity. Such states have a local creation of discord. Some states with vanishing RSP-fidelity have a nonzero discord. For such states, $c_2=c_3=0$ in Eq. (\ref{can02q}).  \label{Gio}
%\end{observation}

\section{Mermin strength as a witness of Schr\"{o}dinger strength}\label{certss}
We now demonstrate the following lemma for the vanishing $\Gamma$ defined in Eq.~(\ref{eq:G}).
\begin{lem}
For any two-qubit  CQ state as in Eq. (\ref{CQ}) or two-qubit QC state, $\Gamma=0$ for all measurements.
\label{propo01}
\end{lem}
\begin{proof}
Any two-qubit CQ state given by Eq. (\ref{CQ}) can be decomposed as follows:
\begin{multline}
    \rho_{CQ}=\frac{1}{4}\big[ \openone_2 \otimes \openone_2 + (p_0-p_1) \hat{r} \cdot \vec{\sigma} \otimes  \openone_2  \\
+  \openone_2 \otimes \left(p_0 \vec{s}_0+  p_1 \vec{s}_1 \right)\cdot \vec{\sigma}  \\
+\hat{r} \cdot \sigma \otimes \left( p_0\vec{s}_0-p_1 \vec{s}_1 \right)\cdot \vec{\sigma} \big],
\end{multline}
where $\hat{r}$ is the Bloch vector of the projectors $\ketbra{i}{i}^A$ and  $\vec{s}_i$ are the Bloch vectors of the quantum states $\rho^B_i$. Since we only make a qubit assumption on Alice's side, her measurements are \textit{a priori}  POVM with elements given by
\be
M^A_{a|x}=\gamma^{A}_{a|x} \openone_2+ (-1)^{a} \frac{\eta^A_{x}}{2} \hat{u}^A_{x} \cdot \vec{\sigma}, \label{POVMalice}
\ee
where  $\gamma^{A}_{0|x}+\gamma^{B}_{1|x}=1$ $\forall$ $x$ and $0 \leq \gamma^{A}_{a|x} \pm \frac{\eta^A_{x}}{2} \leq 1$ $\forall$ $x, a$.
On the other hand, Bob's measurements are also  POVMs with elements given by
\be
M^B_{b|y}=\gamma^{B}_{b|y} \openone_2+ (-1)^{b} \frac{\eta^B_{y}}{2} \hat{v}^B_{y} \cdot \vec{\sigma}, \label{POVMbob}
\ee
where  $\gamma^{A}_{0|y}+\gamma^{B}_{1|y}=1$ $\forall$ $y$ and $0 \leq \gamma^{B}_{b|y} \pm \frac{\eta^B_{y}}{2} \leq 1$ $\forall$ $y, b$. Now, it can be easily checked that the above measurement settings always lead to $\Gamma=0$ for any CQ state $\rho_{CQ}$. 
Similarly, it can be seen that any QC state always gives $\Gamma=0$.
\end{proof}

From the above lemma, we now obtain the following result.
\begin{prop}
Suppose $\Gamma$ in Eq.~(\ref{eq:G}) takes a nonzero value for any given unsteerable correlation. Then the correlation has superunsteerability. 
\end{prop}
\begin{proof}
Note that in the context of the two-setting steering scenario with dichotomic measurements,  any unsteerable correlation can be reproduced by a CQ state of the form $\sum^{d_\lambda-1}_{\lambda=0} p_\lambda \ketbra{\lambda}{\lambda} \otimes \rho_\lambda$ \cite{MGH+16},
 where $\{\ketbra{\lambda}{\lambda}\}$ forms an orthonormal basis in  $\mathbb{C}^{d_\lambda}$, with $d_\lambda \le 4$. In other words, 
 the unsteerable correlation admits the decomposition as in Eq. (\ref{LHV-LHS}) with $d_\lambda \le 4$.
 Here, the dimension of the hidden variable is upper bounded by $4$ since any unsteerable correlation corresponding to this scenario can be simulated by shared classical randomness of dimension $d_\lambda \le 4$ \cite{DW15,DBD+18}.
It then follows that any unsteerable correlation produced from a two-qubit state that requires a hidden variable of dimension $d_\lambda \le 2$
to provide an LHV-LHS model (i.e., any unsteerable correlation produced from a two-qubit state that is not superunsteerable)
can be simulated by a two-qubit state that admits the form of the CQ state given by Eq. (\ref{CQ}).
Thus, for any such unsteerable box,
$\Gamma=0$. On the other hand,  if any unsteerable correlation produced from a two-qubit state has $\Gamma>0$,  then the correlation does not arise from a CQ or QC state. Hence, this correlation requires a hidden variable of dimension $d_\lambda>2$ to provide an LHV-LHS model (i.e., the correlation is superunsteerable).
\end{proof}

To illustrate that $\Gamma$ is needed to capture the presence of a nonzero $SS_{2}$ in any given superunsteerable correlation, let us make the following observation.  
\begin{observation}
    There are superunsteerable states that cannot be used to imply a nonvanishing \textit{Schr\"{o}dinger strength}.
\end{observation}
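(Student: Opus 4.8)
The plan is to realize the \emph{gap} left open by the bulleted properties of the Schr\"{o}dinger strength in Sec.~\ref{SDInl}: the excerpt asserts only that a box carrying an LHV-LHS model with $d_\lambda\le d_A$ has $SS_n=0$, equivalently that $SS_n>0$ is \emph{sufficient} for superunsteerability, but it never claims the converse. Proving the observation therefore amounts to showing that this sufficiency is strict -- exhibiting a state that genuinely demonstrates superunsteerability (no model of the form~(\ref{LHV-LHS}) with $d_\lambda\le d_A$) yet has $SS_n=0$ in every $n$-setting scenario and in both directions, so that it can never be used to produce a box of nonzero Schr\"{o}dinger strength through the maximal-fraction decomposition~(\ref{LSd}). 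First I would fix the arena to two qubits, where Theorem~\ref{thm} forces any candidate to be superseparable ($L_R>d_{\min}=2$), and where the Schr\"{o}dinger strengths of the relevant family are pinned to the correlation tensor by $SS_2=|c_2|$ and $SS_3=|c_3|$ as in~(\ref{SSBD}).

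Next I would construct the witnessing state explicitly from the canonical form~(\ref{can02q}). I would take a degenerate correlation tensor with $c_1\neq 0$ but $c_2=c_3=0$, and switch on local Bloch vectors $\vec a,\vec b$ tuned so that the correlation matrix $R$ of~(\ref{LR}) acquires rank $L_R=3$. This renders the state superseparable -- hence endowed with the global coherence that Theorem~\ref{thm} singles out as the necessary resource -- while, the correlation block being rank one, the steerable content measured by $SS_2=|c_2|$ and $SS_3=|c_3|$ is forced to vanish. There are then three things to verify: (i) $L_R=3$, a direct rank computation on $R$; (ii) that the box produced by the optimal measurements admits \emph{no} decomposition~(\ref{LHV-LHS}) with $d_\lambda\le d_A$, which I would establish \emph{independently} of $SS_n$ -- through the determinant witness $Q$ of~(\ref{QC}) or a direct dimension-counting argument on Bob's assemblage -- so that superunsteerability is certified by a witness distinct from the Schr\"{o}dinger strength; and (iii) that $SS_n=0$ for \emph{all} $n$, not merely for $n=2,3$.

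The main obstacle is reconciling (ii) and (iii): I must certify genuine superunsteerability while simultaneously driving every Schr\"{o}dinger strength to zero, and these two demands pull in opposite directions. The concrete danger is that switching on $\vec a,\vec b$ to lift $L_R$ above $d_{\min}$ also reopens a two-direction steering ellipsoid, reinstating $SS_2>0$ and collapsing the example; conversely, flattening the ellipsoid enough to guarantee $SS_2=SS_3=0$ risks leaving the state merely superseparable rather than genuinely superunsteerable. The delicate step is thus to locate the window of $(\vec a,\vec b,c_1)$ in which the rank-one correlation block keeps the max-steerable-weight optimization at value zero in every setting, while the Bloch-vector-enhanced rank still forbids any $d_\lambda\le d_A$ model. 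I would close (iii) for all $n$ either structurally, by arguing that the steerable-weight problem inherits the vanishing correlation directions so that its optimum stays at zero, or by reducing the $n$-setting quantity on two qubits to the two- and three-setting values already fixed in~(\ref{SSBD}); this reduction, together with the $Q$-witness of~(\ref{QC}) supplying (ii), is where I expect the real work to lie.
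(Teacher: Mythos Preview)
Your overall strategy---a two-qubit state with a rank-one correlation block ($c_1\neq 0$, $c_2=c_3=0$) and nonzero local Bloch vectors to push $L_R$ to $3$, then certifying superunsteerability via the determinant witness $Q$---is exactly what the paper does. The paper in fact borrows the concrete state from Giorgi~\cite{Gio13}, namely Eq.~(\ref{ssepQ=0}), checks $L_R=3$, and evaluates $Q\approx 0.0381$ for the computational/Hadamard measurements.

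There is, however, a gap in your step~(iii). You plan to read off $SS_2=|c_2|$ and $SS_3=|c_3|$ from Eq.~(\ref{SSBD}), but that formula is derived \emph{only} for Bell-diagonal states $\tau_{AB}$, and the very modification you make---switching on $\vec a,\vec b$---takes you out of that family. So (\ref{SSBD}) gives you nothing here, and your worry about ``reopening a two-direction steering ellipsoid'' is precisely the failure mode of invoking it. The paper closes this gap differently: it introduces in Appendix~\ref{certss} a covariance-based \emph{Mermin strength} $\Gamma$, Eq.~(\ref{eq:G}), designed so that $\Gamma>0$ both witnesses superunsteerability and implies $SS_2>0$ (Lemma~\ref{SSLSd}). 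For the state~(\ref{ssepQ=0}) one then checks directly that $\Gamma=0$ on the box, which is what the paper takes as the certificate that $SS_2=0$.

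Two further remarks. First, the paper only establishes $SS_2=0$; your ambition to kill $SS_n$ for every $n$ goes beyond what is claimed or needed in context (the surrounding discussion concerns the two-setting RSP resource). Second, note that the paper's own argument is a numerical verification on a single explicit box rather than a structural proof, so your ``structural'' route---arguing that a rank-one correlation block forces the maximal steerable weight to zero---would actually be a strengthening, if you can carry it through without leaning on~(\ref{SSBD}).
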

To illustrate the above, consider the discordant state as given by Eq.~($13$) in~\cite{Gio13}. It has global coherence since  $L_R$ in Eq.~(\ref{LR}) satisfies $L_R=3$. This state can be written in the form given by Eq.~ (\ref{can02q}) as follows:
\begin{multline} \label{ssepQ=0}
\rho_{AB}=\frac{1}{4} \Big( \openone_2 \otimes \openone_2 + 0.4 \sigma_1 \otimes \openone_2 +0.4 (\openone_2 \otimes \sigma_1 - \openone_2 \otimes \sigma_3)    \\ +0.2 \sigma_3 \otimes \sigma_3  \Big).
\end{multline}
To show that the above state has superunsteerability, we consider the certification of superunsteerability of two-qubit states, with two dichotomic measurements on both subsystems, as introduced in~\cite{JD23}, using a nonlinear witness that certifies quantum discord both ways. This witness is given by a determinant which is written in terms of the conditional probabilities on Bob's side $\{p(b|a;x,y)\}$ as follows:
\begin{widetext}
\begin{align}
	Q&=\left|\begin{array}{cc}p(\mathrm{0|0;0,0})-p(\mathrm{0|1;0,0}) & p(\mathrm{0|0;1,0})-p(\mathrm{0|1;1,0})\\ 
		p(\mathrm{0|0;0,1})-p(\mathrm{0|1;0,1}) & p(\mathrm{0|0;1,1})-p(\mathrm{0|1;1,1})\end{array}\right|. \label{QC}
\end{align}
\end{widetext}
A nonzero value of $Q$ by a bipartite correlation produced by a two-qubit state implies that it cannot be simulated by a two-qubit CQ state or a QC state. From this, it follows that the quantity $Q$ can be used to witness superunsteerability or steerability. 
It has been checked that the state (\ref{ssepQ=0}) has a nonvanishing $Q$ in Eq.~(\ref{QC}),  for instance,  for the measurements with POVM elements given by
$M^A_{0|0}=M^B_{0|0}=\ketbra{0}{0}$ and $M^A_{0|1}=M^B_{0|1}=\ketbra{+}{+}$ it gives rise to $Q \approx 0.0381$. This implies that the correlation that gives rise to such a nonzero value does not admit an LHV-LHS  with $d_\lambda=2$. Therefore, the state (\ref{ssepQ=0}) has superunsteerability. On the other hand, it has been checked that this superunsteerable state has vanishing $\Gamma$ in Eq.~(\ref{eq:G}) for all measurements. This implies that the superunsteerable state cannot be used to imply a nonzero Schr\"{o}dinger strength $SS_{2}$. This may be compared with an example of a Bell-nonlocal bipartite state presented in~\cite{VPO+20}, which does not admit a nontrivial singlet fidelity for device-independent certification of the singlet state~\cite{Jed16}. 

%\bibliographystyle{quantum}
%\bibliography{coherence}

\end{document}